\begin{document}

\title*{Optimal Projections in the Distance-Based Statistical Methods}
\author{Chuanping Yu and Xiaoming Huo}
\institute{Chuanping Yu \at School of Industrial and Systems Engineering, Georgia Institute of Technology, Atlanta, GA \email{c.yu@gatech.edu}
\and Xiaoming Huo \at School of Industrial and Systems Engineering, Georgia Institute of Technology, Atlanta, GA \email{huo@gatech.edu}}
%
%
\maketitle

\abstract*{
xxx will copy and paste the contents below later.}

\abstract{This paper introduces a new way to calculate distance-based statistics, particularly when the data are multivariate.
The main idea is to pre-calculate the optimal projection directions given the variable dimension, and to project multidimensional variables onto these pre-specified projection directions;
by subsequently utilizing the fast algorithm that is developed in \citet{huo2016fast} for the univariate variables, the computational complexity can be improved from $O(m^2)$ to $O(n m \cdot \mbox{log}(m))$, where $n$ is the number of projection directions and $m$ is the sample size.
When $n \ll m/\log(m)$, computational savings can be achieved.
The key challenge is how to find the optimal pre-specified projection directions.
This can be obtained by minimizing the worse-case difference between the true distance and the approximated distance, which can be formulated as a nonconvex optimization problem in a general setting.
In this paper, we show that the exact solution of the nonconvex optimization problem can be derived in two special cases: the dimension of the data is equal to either $2$ or the number of projection directions.
In the generic settings, we propose an algorithm to find some approximate solutions.
Simulations confirm the advantage of our method, in comparison with the pure Monte Carlo approach, in which the directions are randomly selected rather than pre-calculated.}

\section{Introduction}
\label{sec:intro}
Distances are very important in statistics: a class of hypotheses testing methods are based on distances, such as the energy statistics  \citep{szekely2004testing}, the distance covariance \citep{szekely2007measuring, szekely2009brownian,lyons2013distance}, and many others.
This type of testing statistics usually belong to the class of U-statistics or the V-statistics \citep{mises1947asymptotic, hoeffding1992class, korolyuk2013theory}, which require the calculation of all pairwise distances within the sample.
When variables are univariate, assuming the sample size is $m$, both \citet{huo2016fast} and \citet{chaudhuri2018fast} proposed fast algorithms with computational complexity $O(m \mbox{log}(m))$ where $m$ is the sample size.
Recall that the computational complexity is $O(m^2)$ when the statistics are computed directly based on their definitions.
When variables are multivariate, especially when they are high-dimensional, the calculation of the pairwise distances among these multivariate variables can not be implemented directly by the algorithm in \cite{huo2016fast}, and therefore becomes a potential bottleneck.
Our paper is aimed at reducing the computation complexity in the multivariate case by projecting the variables along a set of pre-specified optimal directions.
When the number of pre-specified optimal directions $n \ll m/\log(m)$, computational savings can be achieved, since the computational complexity is $O(nm\cdot\mbox{log}(m))$, which would be less than $O(m^2)$.

We use the energy distances \citep{szekely2004testing} as an example to solidify our motivation.
The energy statistic is used to test the equality between two distributions.
More precisely, suppose $X_1, . . . , X_{n_1} \in\mathbb{R}^p, p \geq 1$ are independent and identically distributed (i.i.d.), sampled from the distribution $F_X$, and $Y_1, . . . , Y_{n_2} \in\mathbb{R}^p$  are i.i.d., sampled from the distribution $F_Y$.
The two-sample test statistic (also called the energy statistic) for testing the two-sample hypothesis
$$
H_0: F_X=F_Y
$$
is defined as \citep{szekely2004testing}:
\begin{equation}
\label{energy}
\mathcal{E}_{n_1,n_2} \triangleq
\frac{2}{n_1 n_2}\sum\limits_{i=1}^{n_1}\sum\limits_{j=1}^{n_2}\left\|X_i-Y_j\right\|
-\frac{1}{n_1^2}\sum\limits_{i=1}^{n_1}\sum\limits_{k=1}^{n_1}\left\|X_i-X_k\right\|
-\frac{1}{n_2^2}\sum\limits_{j=1}^{n_2}\sum\limits_{k=1}^{n_2}\left\|Y_j-Y_k\right\|,
\end{equation}
where $\left\|X_i-Y_j\right\|, \left\|X_i-X_k\right\|, \left\|Y_j-Y_k\right\|$ are the distances from  the two samples.
Note that the statistic $\mathcal{E}_{n_1,n_2}$ solely depends on three types of inter-point distances:
$\left\|X_i-Y_j\right\|,
\left\|X_i-X_k\right\|,
\left\|Y_j-Y_\ell\right\|,
i,k = 1,\ldots,n_1,
j,\ell=1,\ldots,n_2.$
Denote $m = n_1+n_2$.
\cite{huang2017efficient} have showed that it can be efficiently computed with computational complexity $O(m\mbox{log}(m))$ in the univariate case (i.e., $p=1$).

When $X_i$\rq s and $Y_j$\rq s are multivariate (i.e., we have $p>1$), random projections have been proposed to find a fast approximation to the statistic $\mathcal{E}_{n_1,n_2}$.
For example, \cite{huang2017efficient} gave a fast algorithm that is based on random projections, which can achieve $O(n m \cdot \mbox{log}(m))$ computational complexity, where $n$ is the number of random projections.
Note that the approach in \cite{huang2017efficient} is a pure Monte Carlo approach.
The recent advances in the quasi-Monte Carlo methods \citep{niederreiter1992random, morokoff1995quasi} have demonstrated that in some settings, utilizing pre-determined projections can lead to better performance than the completely random ones in the pure Monte Carlo approach.
Quasi-Monte Carlo methods sometimes enjoy faster rate of convergence, e.g., \cite{asmussen2007stochastic}.


Our approach turns a distance calculation in a multivariate situation to the one in a univariate situation.
The proposed approach
\begin{enumerate}\setlength\itemindent{1em}
  \item[{\bf P1.}] first projects each multivariate variable along some pre-specified optimal directions to corresponding one-dimensional subspaces (the projected values are univariate),
  \item[{\bf P2.}] then the sum of the $\ell_1$ norm of the projected values is used to approximate the associated distance in the multivariate setting.
\end{enumerate}
More specifically, let's suppose the multivariate variable is $v = (v_1,...,v_p) \in \mathbb{R}^p$.
Recall that the norm of $v$ is
$$
||v||=\sqrt{\sum\limits_{i=1}^p v_i^2}.
$$
For $n \geq 1$, our objective is to identify the projection directions, which can be represented by vectors $u_1, u_2, ..., u_n \in \mathbb{R}^p$, and a predetermined constant $C_n \in \mathbb{R}$, such that for any $v\in  \mathbb{R}^p$, we have
\begin{equation}
||v||\approx C_n\sum\limits_{i=1}^n\left|u_i^Tv\right|.
\end{equation}
Consequently in step {\bf P2.}, when one needs to compute a distance $\|X_i - Y_j\|$, one can alternatively compute $C_n \sum\limits_{i=1}^n \left|u_i^T X_i - u_i^T Y_j\right|$.
Note that $u_i^T X_i$ and $u_i^T Y_j$ are univariate.
Therefore, the fast algorithm in the one-dimensional case can be utilized.

We continue with the example of the energy distances.
Recall that the pre-specified directions are supposed to be $u_1, . . . , u_n$.
The projected values of the corresponding multivariate variables then become
\begin{eqnarray*}
X_{wi} &=& u_w^TX_i\in\mathbb{R}, w=1, . . ., n; i = 1, . . ., n_1; \mbox{ and } \\
Y_{wj} &=& u_w^TY_j\in\mathbb{R}, w=1, . . ., n; j = 1, . . ., n_2.
\end{eqnarray*}
The distance between any two multivariate variables can be approximated by the sum of these projections multiplying by a constant:
$$
\|X_i - Y_j\| \approx C_n\sum\limits_{w=1}^n|X_{wi}-Y_{wj}|.
$$
Therefore, the statistic $\mathcal{E}_{n_1,n_2}$ in \eqref{energy} can be approximated by
\begin{eqnarray}
\mathcal{E}_{n_1,n_2} &\approx&
C_n\Big(\frac{2}{n_1n_2}\sum\limits_{i=1}^{n_1}\sum\limits_{j=1}^{n_2}\sum\limits_{w=1}^{n}\left\|X_{wi}-Y_{wj}\right\|
-\frac{1}{n_1^2}\sum\limits_{i=1}^{n_1}\sum\limits_{k=1}^{n_1}\sum\limits_{w=1}^{n}\left\|X_{wi}-X_{wk}\right\| \nonumber \\
&& -\frac{1}{n_2^2}\sum\limits_{j=1}^{n_2}\sum\limits_{k=1}^{n_2}\sum\limits_{w=1}^{n}\left\|Y_{wj}-Y_{wk}\right\|\Big) \nonumber \\
&=& C_n\Big(\frac{2}{n_1n_2}\sum\limits_{i=1}^{n_1}\sum\limits_{j=1}^{n_2}\sum\limits_{w=1}^{n}\left|X_{wi}-Y_{wj}\right|
-\frac{1}{n_1^2}\sum\limits_{i=1}^{n_1}\sum\limits_{k=1}^{n_1}\sum\limits_{w=1}^{n}\left|X_{wi}-X_{wk}\right| \label{eq:energy_p} \\
&&-\frac{1}{n_2^2}\sum\limits_{j=1}^{n_2}\sum\limits_{k=1}^{n_2}\sum\limits_{w=1}^{n}\left|Y_{wj}-Y_{wk}\right|\Big). \nonumber
\end{eqnarray}
The second equation is true because in the one-dimensional case, the $\ell_2$ norm becomes the absolute value.
Then one can apply the fast algorithms for univariate variables to calculate the energy statistic in \eqref{eq:energy_p}.

\textbf{Remark:} \textit{Our method is not restricted to the calculation of the energy statistic, or other distance-based statistics.
It can also be applied to the calculation of the distance-based smooth kernel functions.}

In this paper, we first give a detailed description of our strategy to find the optimal pre-specified projection directions.
We formulate the searching for optimal projection directions problem as a minimax optimization problem.
Let $\left\{u_1, u_2, \cdots, u_n\right\}$ denote the optimal set of projection directions, they should minimize the worst-case difference between the true distance and the approximate distance.
Equation \eqref{eq:intuit} below shows this idea in the mathematical form:
\begin{equation}
\label{eq:intuit}
\min\limits_{\begin{subarray}{c}
C_n, u_i: \\
\|u_i\|=1, i=1,\cdots, n\end{subarray}}
\max\limits_{v: \|v\|_2\le 1}
\left|C_n\sum_{w=1}^n\left|u_w^T v \right|-\|v\|\right|.
\end{equation}
Discussion on how to solve the above problem is presented in Section \ref{sec:formulation}.

In general, the problem in \eqref{eq:intuit} is a nonconvex optimization problem, which is potentially NP-hard.
We found that in two special cases, the optimal directions can be derived analytically:
(a) the $2$-dimensional case and
(b) when the dimension is equal to the number of projections.
More details on these two special cases are presented in Section \ref{sec:anal}.
In general cases, we propose a greedy algorithm to find the projection directions.
Note that the greedy algorithm terminates at a local optimal solution to \eqref{eq:intuit}.
In this case, we cannot theoretically guaranteed that the found directions correspond to the global solution to the problem in \eqref{eq:intuit}, which is the case in most nonconvex optimization problems.
At the same time, the simulations show that our approach can still outperform the pure Monte Carlo approach in many occasions.

The rest of this paper is organized as follows.
Section \ref{sec:formulation} shows the formulation of our problem.
Section \ref{sec:anal} provides the analytical solutions to the problem in \eqref{eq:intuit} in two special cases.
Section \ref{sec:general} presents the numerical algorithm for the general cases.
In Section \ref{sec:simulation}, the simulation results of our method are furnished.
Section \ref{sec:conc} contains the conclusion and a summary of our work.
All the technical proofs are relegated to the appendix (Section \ref{sec:appendix}).

We adopt the following notations.
Throughout this paper, we use $p$ to denote the dimension of the data.
The sample size is denoted by $m$.
The number of projections is denoted by $n$.

\section{Problem formulation}
\label{sec:formulation}

As mentioned above, in order to estimate the distance between two multivariate variables, we project them onto some pre-specified one-dimensional linear subspaces.
We present details in the following.
Suppose the multivariate variable is $v = (v_1,...,v_p) \in \mathbb{R}^p$.
Recall that the norm of vector $v$ is
$$
||v||=\sqrt{\sum\limits_{i=1}^pv_i^2}.
$$
Our objective is to design $u_1, u_2, ..., u_n \in \mathbb{R}^p$, for $n \geq 1,$ and  $C_n \in \mathbb{R}$, such that for any $v\in  \mathbb{R}^p$, we have
\begin{equation}\label{appr}
||v||\approx C_n\sum\limits_{i=1}^n\left|u_i^Tv\right|.
\end{equation}
We would like to turn a distance (i.e., norm) of a multivariate variable $v$ into a weighted sum of the absolute values of some of its one dimensional projections (i.e., $u_i^Tv$\rq s), knowing that the one dimensional projections may facilitate efficient numerical algorithms.

Without loss of generality, we may assume $||v|| = 1$.
The approximation problem in \eqref{appr} can be formulated into the following problem:
\begin{equation}\label{obj}
\min\limits_{C_n, u_1, ..., u_n}\max\limits_{v: ||v||_2=1}\left|C_n\sum_{i=1}^n\left|u_i^Tv\right|-1\right|.
\end{equation}
In words, we would  like to select $u_1,..., u_n$ and $C_n$ such that the approximation in \eqref{appr} has the minimal discrepancy in the worst case.
One can verify that the problem in \eqref{obj} and the problem in \eqref{eq:intuit} share the same solution.

To solve the problem in \eqref{obj}, the following two quantities are needed.
For fixed $u_1, u_2, ..., u_n$, we define
\begin{eqnarray}
\label{eq:vmax}
V_{\max} &=& \max\limits_{v:||v||_2=1}\sum\limits_{i=1}^n\left|u_i^Tv\right|, \\
V_{\min} &=& \min\limits_{v:||v||_2=1}\sum\limits_{i=1}^n\left|u_i^Tv\right|, \label{eq:vmin}
\end{eqnarray}
where $V_{\max}$ and $V_{\min}$ are the maximum and minimum of
$\sum\limits_{i=1}^n\left|u_i^Tv\right|$
among all possible $v$ under the constraint $||v||_2=1$, respectively.
With these two quantities (i.e., $V_{\max}\mbox{ and }V_{\min}$), we have the following result.
\begin{theorem}
\label{Cn}
For given $u_1, u_2,...,u_n\in\mathbb{R}^p$, the optimal value for $C_n$ in the problem \eqref{obj} is
$$
C_n = \frac{2}{V_{\min}+V_{\max}}.
$$
Furthermore, the solutions of $u_1, u_2,...,u_n$ in problem \eqref{obj} are identical to the solutions to the following problem:
\begin{equation}
\label{obj_f}
\max\limits_{\begin{subarray}{c}u_1,...,u_n:\\ ||u_i||=1,\forall i, 1\leq i\leq n\end{subarray}}\frac{V_{\min}}{V_{\max}}.
\end{equation}
\end{theorem}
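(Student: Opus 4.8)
The plan is to split the minimax in \eqref{obj} into an inner maximization over $v$ that can be solved in closed form for every fixed $C_n$ and $u_1,\dots,u_n$, followed by an elementary one-dimensional minimization over $C_n$, and then to read off how the resulting optimal value depends on $u_1,\dots,u_n$. Fix the directions and set $f(v)=\sum_{i=1}^n|u_i^Tv|$, a continuous function on the compact unit sphere $S=\{v:\|v\|_2=1\}$; hence $V_{\min}$ and $V_{\max}$ from \eqref{eq:vmin}--\eqref{eq:vmax} are attained, $0\le V_{\min}\le V_{\max}$, and $V_{\max}>0$ since each $\|u_i\|=1$.

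First I would compute the inner maximum. For any $C_n\in\mathbb{R}$ the function $t\mapsto|C_nt-1|$ is convex, so over the interval $[V_{\min},V_{\max}]$ it is maximized at an endpoint; since $f(v)\in[V_{\min},V_{\max}]$ for all $v\in S$ and both endpoints are attained by $f$,
\begin{equation*}
\max_{v\in S}\bigl|C_nf(v)-1\bigr|=\max\bigl\{\,|C_nV_{\min}-1|,\;|C_nV_{\max}-1|\,\bigr\}=:g(C_n).
\end{equation*}
Next I would minimize $g$ over $C_n$. Using $|C_nV_{\max}-1|\ge C_nV_{\max}-1$ and $|C_nV_{\min}-1|\ge 1-C_nV_{\min}$ gives $g(C_n)\ge\max\{C_nV_{\max}-1,\,1-C_nV_{\min}\}$, and this lower bound is minimized where the two terms coincide, i.e. at $C_n=2/(V_{\min}+V_{\max})$, with value $(V_{\max}-V_{\min})/(V_{\max}+V_{\min})$. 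At that $C_n$ one has $C_nV_{\min}\le 1\le C_nV_{\max}$, so $g$ actually equals the lower bound there; hence $C_n=2/(V_{\min}+V_{\max})$ is optimal, proving the first claim.

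Finally, substituting this $C_n$ shows that the optimal value of \eqref{obj}, viewed as a function of $u_1,\dots,u_n$, equals $(V_{\max}-V_{\min})/(V_{\max}+V_{\min})$. Writing $r=V_{\min}/V_{\max}\in[0,1]$, this is $(1-r)/(1+r)$, which is strictly decreasing in $r$, so minimizing it over unit vectors $u_1,\dots,u_n$ is equivalent to maximizing $V_{\min}/V_{\max}$, which is exactly \eqref{obj_f}; the restriction to $\|u_i\|=1$ is justified by the stated equivalence of \eqref{obj} and \eqref{eq:intuit}. The only point requiring care is the closed form for the inner maximum: it relies on the attainment of both $V_{\min}$ and $V_{\max}$ on the compact sphere together with convexity of $t\mapsto|C_nt-1|$, and on nothing finer about the range of $f$ (a degenerate subcase is $V_{\min}=0$, possible when $n<p$, in which the optimal $C_n$ is not unique but the stated formula remains optimal). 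Everything after that step is routine scalar convex optimization.
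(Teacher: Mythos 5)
Your proposal is correct and follows essentially the same route as the paper's proof: reduce the inner maximum over $v$ to $\max\{|C_nV_{\min}-1|,|C_nV_{\max}-1|\}$, equalize the two terms to get $C_n=2/(V_{\min}+V_{\max})$, and observe that the resulting value $(V_{\max}-V_{\min})/(V_{\max}+V_{\min})$ is decreasing in $V_{\min}/V_{\max}$. Your added care about attainment on the compact sphere, the convexity justification for the endpoint reduction, and the degenerate case $V_{\min}=0$ are welcome refinements but do not change the argument.
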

The above theorem indicates that the minimax problem in \eqref{obj} is equivalent to the maximization problem in \eqref{obj_f}.
Note that in general, both problems are nonconvex, therefore potentially NP-hard.
In our analysis, we found that both formulations (in \eqref{obj} and \eqref{obj_f}) are convenient in various steps of derivation.
Both of them are used in later analysis.

\section{Derivable analytical results}
\label{sec:anal}

We present the two special cases where analytical solutions are derivable.
When the dimension is $2$ (i.e., $p=2$), we show in Section \ref{sec:p=2} that an analytical solution to the problem in \eqref{obj_f} is available.
In Section \ref{sec:p=n}, we present another case (when the dimension of the data is equal to the number of projections, that is we have $n=p$) where an analytic solution to the problem in \eqref{obj_f} is derivable.

\subsection{Special case when the dimension is $2$}
\label{sec:p=2}

When the multivariate variables are two-dimensional, we can get the exact optimal projections that minimize the worse-case discrepancy.
The following theorem describes such a result.
\begin{theorem}
\label{dim2}
When $p = 2$, the $2$-dimensional vectors $u_1, u_2, ..., u_n$ can be represented by
$$
u_i = e^{\sqrt{-1}\theta_i}, i = 1,...,n.
$$
The optimal solution in \eqref{obj_f} has the form
\begin{equation}
\label{optdim2}
\theta_i=\frac{(i-1)\pi}{n} + k_i\pi, i = 1,...,n
\end{equation}
where each $k_i\in\mathbb{N}$.
\end{theorem}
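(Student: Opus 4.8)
The plan is to work entirely in the complex-number representation $u_i = e^{\sqrt{-1}\theta_i}$, $v = e^{\sqrt{-1}\phi}$, so that $|u_i^T v| = |\cos(\theta_i - \phi)|$, and to analyze the function $f(\phi) = \sum_{i=1}^n |\cos(\theta_i - \phi)|$. By Theorem~\ref{Cn}, it suffices to find angles $\theta_1,\dots,\theta_n$ that maximize the ratio $V_{\min}/V_{\max}$, where $V_{\max} = \max_\phi f(\phi)$ and $V_{\min} = \min_\phi f(\phi)$. The first observation I would record is that each $\theta_i$ matters only modulo $\pi$ (since $|\cos|$ has period $\pi$ in its argument, and flipping $u_i \mapsto -u_i$ changes nothing), which is exactly why the $k_i\pi$ ambiguity appears in the claimed solution; so I may assume $\theta_i \in [0,\pi)$ and, after relabeling, that they are sorted.

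The core of the argument is a sharp lower bound for $V_{\min}/V_{\max}$ together with a matching configuration. First I would compute the target value: for the equally-spaced choice $\theta_i = (i-1)\pi/n$, the sum $f(\phi) = \sum_{i=1}^n |\cos(\phi - (i-1)\pi/n)|$ is periodic with period $\pi/n$, so its extrema can be read off on one period; a direct computation gives $V_{\max} = 1/\sin(\pi/(2n))$ (attained when $\phi$ aligns with one of the directions) and $V_{\min} = \cos(\pi/(2n))/\sin(\pi/(2n)) \cdot (\text{appropriate constant})$ — in any case an explicit closed form, and the point is that this ratio is the conjectured optimum. Then, to prove optimality, I would argue that for \emph{arbitrary} angles $\theta_1,\dots,\theta_n$ one has $V_{\max} \ge$ something and $V_{\min} \le$ something forcing the ratio down. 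The natural tool is an averaging/rearrangement inequality: $V_{\min} \le \frac{1}{2\pi}\int_0^{2\pi} f(\phi)\,d\phi = \frac{2n}{\pi}$ gives a universal upper bound on $V_{\min}$, but this alone is not tight against a variable $V_{\max}$, so one must instead lower-bound $V_{\max}$ in terms of how clustered the $\theta_i$ are and play the two bounds against each other. Concretely, I expect to show that if the sorted gaps between consecutive $\theta_i$ (cyclically, mod $\pi$) are not all equal to $\pi/n$, then one can find a $\phi$ in the largest gap where $f(\phi)$ is strictly smaller than the equispaced $V_{\min}$ while $V_{\max}$ has not decreased proportionally — a perturbation/exchange argument moving the directions toward equal spacing monotonically increases $V_{\min}/V_{\max}$.

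The main obstacle, I anticipate, is making the optimality half rigorous: the function $f$ is piecewise-smooth with kinks exactly where $\phi \equiv \theta_i + \pi/2 \pmod \pi$, and both $V_{\max}$ and $V_{\min}$ depend on the configuration in a non-smooth, combinatorial way (which sign pattern of the cosines is active). A clean route is to exploit the sign structure: on each maximal interval where the active signs $\varepsilon_i = \operatorname{sign}\cos(\theta_i - \phi)$ are constant, $f(\phi) = \bigl|\sum_i \varepsilon_i e^{\sqrt{-1}(\theta_i-\phi)}\bigr|$-type expression is a single shifted cosine $R\cos(\phi - \psi)$, so $V_{\min}$ and $V_{\max}$ are determined by a finite list of amplitudes $R$ and phases $\psi$ over the $O(n)$ sign-cells; optimizing the ratio then reduces to a finite (though intricate) trigonometric problem, and a symmetry/convexity argument pins down the equispaced configuration as the unique maximizer up to the $k_i\pi$ shifts and a global rotation — and one should note the global rotation is \emph{not} present in the stated solution only because the problem \eqref{obj_f} is itself rotation-invariant, so fixing $\theta_1 = 0$ (i.e. $k_1 = 0$, first index contributing the leading term) is a harmless normalization that I would flag explicitly. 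I would close by verifying that every configuration of the form \eqref{optdim2} indeed achieves the optimal ratio, which follows since adding $k_i\pi$ leaves each $|u_i^T v|$ unchanged.
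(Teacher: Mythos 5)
Your setup matches the paper's: reduce to angles mod $\pi$, sort them, pass to the gap variables, and aim to show the ratio $V_{\min}/V_{\max}$ of $f(\phi)=\sum_i|\cos(\theta_i-\phi)|$ is maximized at equal spacing. Your computation of the equispaced value $V_{\max}=1/\sin(\pi/(2n))$ is correct (though for even $n$ the maximum is attained at a midpoint between consecutive directions, not at a direction), and the closing remarks about the $k_i\pi$ ambiguity and rotation invariance are fine. The gap is in the optimality half, which you yourself flag as the main obstacle and then leave unresolved: you offer an integral average bound $V_{\min}\le 2n/\pi$ that you concede is not tight, a ``perturbation/exchange argument'' asserting that moving directions toward equal spacing monotonically improves the ratio, and a sign-cell decomposition ending in an unspecified ``symmetry/convexity argument.'' None of these is carried far enough to constitute a proof, and the monotonicity claim in particular is exactly the hard combinatorial statement that needs an actual mechanism, since (as you note) both $V_{\min}$ and $V_{\max}$ move non-smoothly as the configuration changes.

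The paper's device, which is what you are missing, is to replace the true extrema by averages over two well-chosen \emph{finite} point sets: $V_{\min}=\min_\theta f(\theta)\le \frac{1}{n}\sum_{i=1}^n f(\alpha_i-\frac{\pi}{2})$ (evaluating $f$ at the points orthogonal to each direction) and $V_{\max}=\max_\theta f(\theta)\ge \frac{1}{n}\bigl[\sum_{i=1}^{n-1} f\bigl(\frac{\alpha_i+\alpha_{i+1}}{2}-\frac{\pi}{2}\bigr)+f\bigl(\frac{\alpha_n+\alpha_1}{2}\bigr)\bigr]$ (evaluating at the midpoints). This sandwiches $V_{\min}/V_{\max}$ by an explicit ratio $N_n/D_n$ of trigonometric sums in the gaps $\delta_i$, which is smooth in the $\delta_i$ and can be shown stationary (and then maximal) at $\delta_i=\pi/n$ by perturbing two gaps at a time; the proof is completed by checking that at equal spacing the true ratio attains this upper bound, using the $\pi/n$-periodicity and piecewise linearity of $f$ in a suitable parametrization. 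Without a surrogate of this kind that converts the non-smooth min/max into an explicit function of the gaps, your perturbation plan has nothing concrete to differentiate, so the argument does not close as written.
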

Specially, when $n$ is odd, the optimal solutions can be represented by the equally spaced points on the circle.
Furthermore, we can get the error rate in the $2$-dimensional case, as in the following theorem.
\begin{theorem}
\label{ese_O}
If $u_1, \cdots, u_n$ are chosen according to Theorem \ref{dim2}, we have
$$
\mathop{\mathbb{E}}\limits_{v\sim \mbox{Unif}(S^1)}\left\{\left|C_n\sum\limits_{i=1}^n\left|u_i^Tv\right|-1\right|^2\right\} = O\left(\frac{1}{n^2}\right).
$$
\end{theorem}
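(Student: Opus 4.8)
The plan is to reduce everything to a one‑dimensional trigonometric sum and then control it by an elementary Riemann‑sum estimate. First I would use the planar parametrization of Theorem~\ref{dim2}: identify $v\in S^1$ with $v=(\cos\phi,\sin\phi)$ and take $u_i=(\cos\theta_i,\sin\theta_i)$ with $\theta_i=(i-1)\pi/n$ (the additive multiples of $\pi$ allowed in \eqref{optdim2} change $u_i$ only by sign and hence leave $|u_i^Tv|$ unchanged). Then $|u_i^Tv|=|\cos(\phi-\theta_i)|$, so the object to control is $f(\phi):=\sum_{i=1}^n|\cos(\phi-(i-1)\pi/n)|$. Under this parametrization the definitions \eqref{eq:vmax}--\eqref{eq:vmin} become $V_{\max}=\max_\phi f(\phi)$ and $V_{\min}=\min_\phi f(\phi)$, while by Theorem~\ref{Cn} the constant is $C_n=2/(V_{\min}+V_{\max})$.

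The heart of the argument is the uniform estimate $f(\phi)=\tfrac{2n}{\pi}+O(1)$. I would observe that $\tfrac{\pi}{n}f(\phi)$ is precisely the left‑endpoint Riemann sum of $\int_0^\pi|\cos(\phi-t)|\,dt$ over the partition of $[0,\pi]$ into $n$ equal subintervals, and that this integral equals $2$ for every $\phi$ (substitute $s=\phi-t$ and use the $\pi$‑periodicity of $|\cos|$ together with $\int_0^\pi|\cos|=2$). Since $t\mapsto|\cos(\phi-t)|$ is $1$‑Lipschitz, the elementary bound on Riemann‑sum error gives $\bigl|\tfrac{\pi}{n}f(\phi)-2\bigr|\le \pi^2/(2n)$, i.e. $\bigl|f(\phi)-\tfrac{2n}{\pi}\bigr|\le\pi/2$, uniformly in $\phi$.

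The remainder is bookkeeping. The estimate on $f$ forces $V_{\min}+V_{\max}=\tfrac{4n}{\pi}+O(1)$, whence $C_n=\tfrac{\pi}{2n}(1+O(1/n))=\tfrac{\pi}{2n}+O(1/n^2)$; multiplying out, $C_nf(\phi)-1=\tfrac{\pi}{2n}(1+O(1/n))\bigl(\tfrac{2n}{\pi}+O(1)\bigr)-1=O(1/n)$, again uniformly in $\phi$. Hence $\mathbb{E}_{v\sim\mathrm{Unif}(S^1)}\bigl\{\,|C_n\sum_{i=1}^n|u_i^Tv|-1|^2\bigr\}=\mathbb{E}_\phi\{|C_nf(\phi)-1|^2\}\le \sup_\phi|C_nf(\phi)-1|^2=O(1/n^2)$.

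There is no genuine obstacle here; the only point that needs care is that every error term above is uniform in $\phi$, so that passing from the pointwise bound to the supremum — and thus to the expectation — is legitimate, which is automatic from the Riemann‑sum bound. I would also remark that the bound is not sharp: using the Fourier series $|\cos s|=\tfrac{2}{\pi}+\tfrac{4}{\pi}\sum_{k\ge1}\tfrac{(-1)^{k+1}}{4k^2-1}\cos(2ks)$, the equally‑spaced sampling annihilates every harmonic except those of index divisible by $n$, which upgrades the estimate to $f(\phi)=\tfrac{2n}{\pi}+\Theta(1/n)\cos(2n\phi)+O(1/n^3)$ and, after the same bookkeeping, to $\mathbb{E}_\phi\{|C_nf(\phi)-1|^2\}=\Theta(1/n^4)$; but the claimed $O(1/n^2)$ already follows from the crude Lipschitz estimate.
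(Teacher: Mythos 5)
Your proof is correct, and it takes a genuinely different and considerably more elementary route than the paper's. The paper computes the expectation exactly: it expands the square, evaluates $\mathbb{E}|u_i^Tv|^2=\tfrac12$, $\mathbb{E}(|u_i^Tv||u_j^Tv|)=f(s)$ and $\mathbb{E}|u_i^Tv|=\tfrac{2}{\pi}$ by direct integration, derives the closed form $C_n=2\tan\tfrac{\pi}{4n}$, and then sums everything using the trigonometric identities of Propositions \ref{prop_2} and \ref{prop_1} before Taylor-expanding the resulting closed-form expression. You instead prove the stronger \emph{uniform} bound $\sup_{\phi}\left|C_nf(\phi)-1\right|=O(1/n)$ via the Riemann-sum/Lipschitz estimate $\left|f(\phi)-\tfrac{2n}{\pi}\right|\le\tfrac{\pi}{2}$, and the expectation bound follows trivially; this is shorter, avoids all the summation identities, and yields a worst-case guarantee the paper's argument does not state. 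What the exact computation buys in principle is the precise asymptotic constant — but here your concluding remark exposes a real issue: your Fourier-series observation (equally spaced sampling kills all harmonics of $|\cos|$ except those of index divisible by $n$) gives $\mathbb{E}_\phi\left\{|C_nf(\phi)-1|^2\right\}=\Theta(1/n^4)$, whereas the paper reports $\tfrac{\pi^2}{8n^2}$. Expanding the paper's own exact expression \eqref{eq_4} consistently to order $n^{-2}$ (rather than keeping $2\tan^2\tfrac{\pi}{4n}$ exactly while truncating the other terms at leading order, as the paper does) shows that the $n^{-2}$ contributions cancel, so the true rate is indeed $O(1/n^4)$ and your remark is right; the stated theorem, claiming only $O(1/n^2)$, remains true either way, and your argument fully establishes it.
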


\textbf{Remark:}\textit{Theorem \ref{ese_O} can be used as a guidance of choosing the number of directions.
Assume we would like to control the squared error to be $\epsilon$.
Then, we can get $\frac{1}{n^2} = \epsilon$, and therefore the number of directions should be larger than $\frac{1}{\sqrt{\epsilon}}$.}

In the above theorem, the random vector $v$ is sampled independently from the Uniform distribution on the unit circle $S^1$.
Note that the squared error rate is $O(1/n^2)$.
The following theorem presents the corresponding rate for the pure random projections.
\begin{theorem}
\label{ese_M}
If $u_1, \cdots, u_n$ are selected base on Monte Carlo, we have
$$
\mathop{\mathbb{E}}\limits_{u_i, v\sim \mbox{Unif}(S^1)}\left\{\left|C_n\sum\limits_{i=1}^n\left|u_i^Tv\right|-1\right|^2\right\} = O\left(\frac{1}{n}\right).
$$
\end{theorem}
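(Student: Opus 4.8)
The plan is to compute the expectation over the joint randomness of $u_1,\dots,u_n$ and $v$, all drawn i.i.d.\ uniformly from $S^1$, and to isolate the leading-order term in $n$. First I would reduce to the quantity $S_n := \sum_{i=1}^n |u_i^T v|$. By rotational invariance of the uniform distribution on $S^1$, for fixed $v$ the random variables $|u_i^T v|$ are i.i.d.\ copies of $|\cos\Theta|$ with $\Theta \sim \mathrm{Unif}[0,2\pi)$; hence their common mean is $\mu := \mathbb{E}|\cos\Theta| = 2/\pi$ and their common variance $\sigma^2 := \mathrm{Var}(|\cos\Theta|) = 1/2 - 4/\pi^2$ is a finite positive constant. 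Thus $\mathbb{E}[S_n] = n\mu$ and $\mathrm{Var}(S_n) = n\sigma^2$, and these do not depend on $v$, so the same holds after averaging over $v$ as well.

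Next I would handle the constant $C_n$. As defined, $C_n = 2/(V_{\min}+V_{\max})$ depends on the realized directions, which is awkward; the cleanest route is to observe that for the purposes of an $O(1/n)$ \emph{upper} bound it suffices to use \emph{any} deterministic normalizing constant of the right order, in particular $\bar C_n := 1/(n\mu) = \pi/(2n)$, since the true optimal $C_n$ can only make the objective in \eqref{obj} smaller. (If one wants to be careful that the theorem is about the optimal $C_n$: one shows $V_{\min}+V_{\max}$ concentrates around $2n\mu$ with high probability, so $C_n = \bar C_n(1+o(1))$; but for the stated rate the substitution of $\bar C_n$ is enough.) With this choice,
\begin{equation*}
\Bigl| \bar C_n S_n - 1 \Bigr|^2 = \frac{1}{n^2\mu^2}\bigl(S_n - n\mu\bigr)^2,
\end{equation*}
so taking expectations and using $\mathbb{E}\bigl[(S_n-n\mu)^2\bigr] = \mathrm{Var}(S_n) = n\sigma^2$ gives
\begin{equation*}
\mathbb{E}\Bigl\{\bigl|\bar C_n S_n - 1\bigr|^2\Bigr\} = \frac{\sigma^2}{n\mu^2} = O\!\left(\frac{1}{n}\right).
\end{equation*}

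The only delicate point is the treatment of $C_n$ versus $\bar C_n$: one must argue either (i) that replacing $C_n$ by the suboptimal $\bar C_n$ only increases the worst-case (hence the mean-squared) error, so the displayed bound with $\bar C_n$ dominates the one with the true $C_n$; or (ii) that $C_n$ concentrates at $\bar C_n$ fast enough that the correction is lower order. Option (i) is immediate from the optimality characterization of $C_n$ in Theorem \ref{Cn} applied pointwise in the directions, and is the route I would take. Everything else is the elementary second-moment computation above, and the contrast with Theorem \ref{ese_O} is exactly that random directions give only the $\sqrt{n}$-type cancellation ($\mathrm{Var}(S_n)=\Theta(n)$), whereas the equally spaced directions of Theorem \ref{dim2} make the analogous discrepancy deterministically $O(1/n)$ in sup-norm.
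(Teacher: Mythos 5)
Your proof is correct and is essentially the paper's argument in disguise: the paper also takes the deterministic constant $C_n=\pi/(2n)$ (defined by $C_n\int_{S^1}\sum_i|u_i^Tv|\,du_i=1$), computes the same three moments $\mathbb{E}|u_i^Tv|=2/\pi$, $\mathbb{E}|u_i^Tv|^2=1/2$, $\mathbb{E}\bigl(|u_i^Tv||u_j^Tv|\bigr)=4/\pi^2$ by conditioning and independence, and expands the square term by term, which is exactly your $\mathrm{Var}(S_n)=n\sigma^2$ computation written out longhand; both routes land on $(\pi^2-8)/(8n)$. The only caveat is your ``option (i)'': minimax optimality of $C_n=2/(V_{\min}+V_{\max})$ from Theorem \ref{Cn} controls $\max_v|C_n\sum_i|u_i^Tv|-1|$, and a smaller worst case does not imply a smaller mean square over $v\sim\mbox{Unif}(S^1)$, so that substitution argument is not valid as stated; it is moot here only because the theorem, as the paper proves it, is about the estimator with the fixed normalization $\pi/(2n)$ rather than the data-dependent optimal $C_n$.
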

In the above theorem, both random vector $v$ and vectors $u_i$'s are independently sampled from the Uniform distribution on the unit circle ($S^1$).
The squared error rate in the pure Monte Carlo case is $O(1/n)$.
These two theorems illustrate the theoretical advantage of adopting the pre-calculated projection directions (in relative to the random projections).
Such a phenomenon has been discovered in the literature regarding the quasi-Monte Carlo methodology.

\subsection{Second special case with provable result}
\label{sec:p=n}

When the dimension is larger than $2$, the problem in \eqref{obj} is challenging.
There is some potentially relevant literature in mathematics, such as the searching for algorithms to locate the equally-distributed points on the surfaces of some high-dimensional spheres \citep{sloan2004extremal, hesse2010numerical, brauchart2014qmc}.
We fail to locate the exact solutions to our problem.

Our analysis indicates that when the number of projections is equal to the dimension, an analytical solution to the problem in \eqref{obj} is derivable.
We present details in the following.
To derive our analytical solution in a special case, we need to revisit two quantities, $V_{\min}$ and $V_{\max}$, which have been introduced in \eqref{eq:vmax} and \eqref{eq:vmin}.
The following lemma is about $V_{\max}$.
\begin{lemma}\label{V_max}
For fixed $u_1, u_2,...,u_n\in\mathbb{R}^p$, we have
\begin{equation}\label{Vmax}
V_{\max} = \max\limits_{s_i\in\{1, -1\}}\left\|\sum\limits_{i=1}^ns_iu_i\right\|.
\end{equation}
\end{lemma}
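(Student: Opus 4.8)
The plan is to show that the maximum of $\sum_{i=1}^n |u_i^T v|$ over the unit sphere is attained at a point $v$ for which all the signs of the inner products $u_i^T v$ can be prescribed arbitrarily, so that the absolute values can be dropped. The key observation is that $\sum_{i=1}^n |u_i^T v| = \max_{s\in\{-1,1\}^n}\sum_{i=1}^n s_i\, u_i^T v = \max_{s\in\{-1,1\}^n}\bigl(\sum_{i=1}^n s_i u_i\bigr)^T v$ for every fixed $v$, since each $|u_i^T v|$ equals $\max_{s_i\in\{-1,1\}} s_i u_i^T v$ and the maximization over the coordinates $s_i$ decouples. Substituting this into \eqref{eq:vmax} gives
\begin{equation*}
V_{\max} = \max_{v:\|v\|_2=1}\ \max_{s\in\{-1,1\}^n}\ \Bigl(\sum_{i=1}^n s_i u_i\Bigr)^T v.
\end{equation*}

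Next I would interchange the two maxima (both are over compact sets, and for a finite set this is simply reordering a finite family of maximizations) to obtain
\begin{equation*}
V_{\max} = \max_{s\in\{-1,1\}^n}\ \max_{v:\|v\|_2=1}\ \Bigl(\sum_{i=1}^n s_i u_i\Bigr)^T v.
\end{equation*}
For the inner maximization, the Cauchy--Schwarz inequality (equivalently, the variational characterization of the Euclidean norm) gives $\max_{\|v\|_2=1} w^T v = \|w\|$ for any fixed vector $w$, with the optimum attained at $v = w/\|w\|$ when $w\neq 0$ (and the value is $0$ when $w=0$, consistent with the formula). Applying this with $w = \sum_{i=1}^n s_i u_i$ yields
\begin{equation*}
V_{\max} = \max_{s\in\{-1,1\}^n}\ \Bigl\|\sum_{i=1}^n s_i u_i\Bigr\|,
\end{equation*}
which is exactly \eqref{Vmax}. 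The degenerate case where $\sum_i s_i u_i = 0$ for the optimal sign pattern needs a sentence of care, but it only makes the inner max equal zero and does not affect the identity.

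There is no real obstacle here; the argument is a routine swap-of-maxima together with Cauchy--Schwarz. The only point worth stating carefully is the justification for exchanging the maximum over the continuous set $\{v:\|v\|_2=1\}$ with the maximum over the finite sign set $\{-1,1\}^n$ — this is valid because $\max_a\max_b f(a,b)=\max_b\max_a f(a,b)$ holds unconditionally whenever both maxima exist, which they do since $S^{p-1}$ is compact, $f$ is continuous, and $\{-1,1\}^n$ is finite. I would also remark that the same two steps, with $\max$ replaced by $\min$ only in the outer step, do \emph{not} give an analogous clean formula for $V_{\min}$, which is why $V_{\min}$ requires separate treatment later.
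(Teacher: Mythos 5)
Your proposal is correct and follows essentially the same route as the paper's proof: rewrite $\sum_i\left|u_i^Tv\right|$ as $\max_{s_i\in\{1,-1\}}\left(\sum_i s_iu_i\right)^Tv$, exchange the two maximizations, and apply Cauchy--Schwarz to evaluate the inner maximum over the unit sphere. Your version is if anything slightly more careful about justifying the swap of maxima and the degenerate case $\sum_i s_iu_i=0$, but there is no substantive difference.
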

Lemma \ref{V_max} points out a way to calculate $V_{\max}$, that is, given binary $s_i$'s, finding out the linear combination $\sum\limits_{i=1}^ns_iu_i$ with the maximal norm out of the all possible $2^n$ linear combinations.
Let $\{s_i^{\max}\in\{1,-1\}: i = 1, ..., n\}$ denote the solution for \eqref{Vmax} when $u_1,\cdots, u_n$ are given.
The Algorithm \ref{alg:s} formally presents the aforementioned approach.
Assume we are in the $k$-th loop, where the $u_j$\rq s are known, which are denoted by $u_1^{(k)}, u_2^{(k)}, ..., u_n^{(k)}$.
Let $s_i^{(k)}$\rq s denote the $s_i$'s that can achieve $V_{\max}$ in the $k$-th loop.
We have the Algorithm \ref{alg:s}.
\begin{algorithm}[htbp]
\caption{Find $s_i^{\max}$\rq s in the $k$-loop}
\label{alg:s}
\begin{algorithmic}[1]
\REQUIRE{Unit vectors $u_1^{(k)}, u_2^{(k)}, \dots, u_n^{(k)}\in S^{p-1}$ are given.}
\ENSURE {$s_i^{(k)}$\rq s.}
\FORALL {binary combination of $s_i^{(k)}$\rq s}
\STATE Calculate the value $\left\|\sum\limits_{i=1}^ns_iu_i\right\|$.
 \ENDFOR
\STATE The binary combination that can make the value of $\left\|\sum\limits_{i=1}^ns_iu_i\right\|$ be the maximum among all the possible values,
is the $s_i^{\max}$\rq s, which is denoted as $s_i^{(k)}$\rq s.
\end{algorithmic}
\end{algorithm}

As for $V_{\min}$, suppose $v_{\min}$ is a minimizer of $V_{\min}$.
We have the following property for $v_{\min}$.
\begin{lemma}
\label{prop}
For fixed $u_1, u_2,...,u_n\in\mathbb{R}^p$, if $\Omega$ is an intersection of $S^{p-1}$ and a linear subspace with at least 2 dimensions, then the solution to the minimization problem
$$
\min_{v\in \Omega} f(v) = \sum\limits_{i=1}^n\left|u_i^Tv\right|
$$
must have $u_j^Tv_{\min}=0$ for at least one $j$ $(1\leq j\leq n)$.
\end{lemma}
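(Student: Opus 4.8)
The plan is to argue by contradiction, exploiting the fact that away from the hyperplanes $\{u_j^T v = 0\}$ the function $f$ is locally linear, and a nonconstant linear function cannot attain a local minimum at an interior point of a sphere of dimension at least one. So suppose $v_{\min}\in\Omega=S^{p-1}\cap W$ minimizes $f$, where $W$ is the given linear subspace with $d:=\dim W\ge 2$, and suppose toward a contradiction that $u_j^T v_{\min}\ne 0$ for every $j=1,\dots,n$.

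First I would linearize $f$ near $v_{\min}$. Put $s_i:=\mathrm{sign}(u_i^T v_{\min})\in\{+1,-1\}$ and $g:=\sum_{i=1}^n s_i u_i$. By continuity, for all $v$ in a small neighbourhood of $v_{\min}$ we still have $\mathrm{sign}(u_i^T v)=s_i$, hence $|u_i^T v|=s_i\,u_i^T v$ and therefore $f(v)=g^T v$ on that neighbourhood. In particular $f(v_{\min})=g^T v_{\min}=\sum_{i=1}^n|u_i^T v_{\min}|>0$, since each summand is strictly positive under our assumption (and $n\ge 1$).

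Next I would produce a curve inside $\Omega$ through $v_{\min}$ along which $f$ strictly decreases. Because $d\ge 2$, the tangent space $T_{v_{\min}}\Omega=\{w\in W:\ w^T v_{\min}=0\}$ has dimension $d-1\ge 1$, so choose a unit vector $w$ in it. Set $\gamma(t):=(\cos t)\,v_{\min}+(\sin t)\,w$; since $v_{\min},w\in W$ and $v_{\min}\perp w$, a one-line check gives $\gamma(t)\in W$ and $\|\gamma(t)\|=1$, so $\gamma(t)\in\Omega$ for all $t$, with $\gamma(0)=v_{\min}$. For $|t|$ small enough that $\gamma(t)$ lies in the neighbourhood above, $f(\gamma(t))=g^T\gamma(t)=(g^T v_{\min})\cos t+(g^T w)\sin t=f(v_{\min})\cos t+(g^T w)\sin t$. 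Since $t=0$ minimizes this differentiable function of $t$, its derivative at $0$, namely $g^T w$, vanishes; hence $f(\gamma(t))=f(v_{\min})\cos t$ for small $t$, which is strictly smaller than $f(v_{\min})$ for every small $t\ne 0$ because $f(v_{\min})>0$. This contradicts that $v_{\min}$ minimizes $f$ over $\Omega$, so some $u_j^T v_{\min}$ must equal $0$.

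The only subtle points, and the places where the two hypotheses enter, are: (a) the assumption that all $u_j^T v_{\min}\ne 0$ is exactly what makes $f$ agree with the smooth linear map $v\mapsto g^T v$ near $v_{\min}$, so that we may differentiate, and it simultaneously forces $f(v_{\min})>0$, which powers the final contradiction; and (b) the hypothesis $\dim W\ge 2$ is exactly what guarantees a nonzero tangent direction $w$, i.e.\ that $\Omega$ is at least a $1$-sphere near $v_{\min}$, so that a genuinely nonconstant great circle through $v_{\min}$ exists inside $\Omega$. Once these are in place, the contradiction reduces to the elementary fact that $a\cos t$ with $a>0$ has a maximum, not a minimum, at $t=0$; I do not anticipate any serious obstacle beyond being careful about these two points.
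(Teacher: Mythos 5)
Your proof is correct, and it rests on the same core idea as the paper's: away from the hyperplanes $\{u_j^T v=0\}$ the function $f$ agrees with a linear functional, and a linear functional cannot have a minimum at a point of the sphere where one can still move tangentially. The difference is in execution. The paper argues globally: it partitions $S^{p-1}$ into at most $2^n$ sectors cut out by the hyperplanes, asserts that within each sector $f$ is linear so "the minima cannot be an interior point," and then handles a proper subspace $W$ by rewriting everything in an orthonormal basis of $W$, reducing to the full-space case in $\mathbb{R}^k$. Your version works locally at the minimizer and actually proves the assertion the paper leaves implicit: the great-circle computation $f(\gamma(t))=f(v_{\min})\cos t+(g^T w)\sin t$, the first-order condition $g^T w=0$, and the strict decrease of $f(v_{\min})\cos t$ for $f(v_{\min})>0$ together constitute a rigorous justification of "no interior minimum," and because you choose $w$ in the tangent space of $W$ you handle the subspace case directly without the change-of-basis reduction. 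So your argument is, if anything, more complete than the one in the paper; the only hypotheses you use are exactly the ones you flag, namely $u_j^T v_{\min}\neq 0$ for all $j$ (to linearize and to get $f(v_{\min})>0$) and $\dim W\geq 2$ (to get a nonzero tangent direction).
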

Geometrically, the above lemma indicates that vector $v_{\min}$ should be orthogonal to at least one of the projection vector $u_j$.
For vector $v_{\min}$, we will need the following definition to further our derivation.
\begin{definition}[maximal subset]
\label{def:max-subset}
We call $\Omega(v_{\min})$ a maximal subset of the set $\{u_1,...,u_n\}$ if it satisfies
$$
\Omega(v_{\min}) = \left\{u_j: u_j^Tv_{\min}=0\right\} \subset \{u_1,...,u_n\},
$$
and it cannot be a strict subset for another $\Omega(v^\prime_{\min})$ where $v^\prime_{\min}$ is a minimizer that is different from $v_{\min}$.
\end{definition}
Lemma \ref{prop} ensures that the set $\Omega(v_{\min})$ cannot be empty.
The following lemma shows that the linear subspace that is spanned by the elements of $\Omega(v_{\min})$ must have certain dimensions.
\begin{lemma}
\label{Omega_0}
If $\Omega(v_{\min})$ is a maximal subset of $u_1,...,u_n$,
we must have
$$
\mbox{rank}\left(\Omega(v_{\min})\right)=p-1,
$$
for any minimizer $v_{\min}$.
\end{lemma}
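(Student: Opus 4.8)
The plan is to show $\mathrm{rank}(\Omega(v_{\min})) = p-1$ by sandwiching: first prove the rank is at most $p-1$, then prove it is at least $p-1$. For the upper bound, suppose for contradiction that $\mathrm{span}(\Omega(v_{\min}))$ has dimension $p$, i.e., the vectors in $\Omega(v_{\min})$ span all of $\mathbb{R}^p$. Since every $u_j \in \Omega(v_{\min})$ satisfies $u_j^T v_{\min} = 0$, the vector $v_{\min}$ would be orthogonal to a spanning set, forcing $v_{\min} = 0$, which contradicts $\|v_{\min}\| = 1$. Hence $\mathrm{rank}(\Omega(v_{\min})) \le p-1$.

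The substantive direction is the lower bound: $\mathrm{rank}(\Omega(v_{\min})) \ge p-1$. Here I would argue by contradiction, assuming $\mathrm{rank}(\Omega(v_{\min})) \le p-2$. The idea is to exhibit a nonzero subspace $W$ consisting of vectors orthogonal to $\mathrm{span}(\Omega(v_{\min}))$ — since $\mathrm{span}(\Omega(v_{\min}))^\perp$ has dimension at least $2$, it contains a $2$-dimensional subspace $W'$, and $v_{\min} \in \mathrm{span}(\Omega(v_{\min}))^\perp$ as well. Take $\Omega := S^{p-1} \cap \big(\mathrm{span}\{v_{\min}\} + W'\big)$, or more carefully a suitable $2$-dimensional slice through $v_{\min}$ lying inside $\mathrm{span}(\Omega(v_{\min}))^\perp$; this is an intersection of $S^{p-1}$ with a linear subspace of dimension $\ge 2$, so Lemma~\ref{prop} applies to the restricted problem $\min_{v\in\Omega} f(v) = \sum_{i=1}^n |u_i^T v|$. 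On $\Omega$, every $u_j \in \Omega(v_{\min})$ already contributes $0$, so $f$ restricted to $\Omega$ only sees the vectors $u_j \notin \Omega(v_{\min})$; note $v_{\min}$ itself minimizes $f$ over all of $S^{p-1}$, hence in particular over $\Omega$. Lemma~\ref{prop} then produces a minimizer $v'_{\min} \in \Omega$ with $u_k^T v'_{\min} = 0$ for some $k$ with $u_k \notin \Omega(v_{\min})$. Since $v'_{\min} \in \Omega$ is still a global minimizer of $f$ over $S^{p-1}$ (it achieves the same value $V_{\min}$, because $f(v'_{\min})$ over $\Omega$ equals the full $f$ as the omitted terms vanish, and it cannot be smaller than $V_{\min}$), and it annihilates strictly more of the $u_i$'s — namely at least all of $\Omega(v_{\min})$ plus $u_k$ — this contradicts the maximality in Definition~\ref{def:max-subset}.

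The delicate point I expect to be the main obstacle is verifying that the minimizer $v'_{\min}$ produced by Lemma~\ref{prop} on the slice $\Omega$ genuinely annihilates all of $\Omega(v_{\min})$ and not merely some subset; this requires choosing $\Omega$ so that $\Omega \subseteq \mathrm{span}(\Omega(v_{\min}))^\perp$, i.e., every vector of $\Omega$ is automatically orthogonal to every $u_j \in \Omega(v_{\min})$. This is exactly why one slices inside $\mathrm{span}(\Omega(v_{\min}))^\perp$, whose dimension is $p - \mathrm{rank}(\Omega(v_{\min})) \ge 2$ under the contradiction hypothesis — so a $2$-dimensional slice through $v_{\min}$ (which lies in this orthogonal complement) exists. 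A second point needing care is confirming that $v'_{\min}$ remains a \emph{global} minimizer over $S^{p-1}$: on $\Omega$, $f(v) = \sum_{u_i \notin \Omega(v_{\min})} |u_i^T v|$, and since $f(v_{\min}) = V_{\min}$ with $v_{\min}\in\Omega$, the minimum of $f$ over $\Omega$ equals $V_{\min}$, so any minimizer on $\Omega$ is global. With these two checks in place, the maximality of $\Omega(v_{\min})$ is violated — $\Omega(v'_{\min}) \supsetneq \Omega(v_{\min})$ — completing the contradiction and hence the proof.
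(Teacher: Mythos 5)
Your proposal is correct and follows essentially the same route as the paper: the upper bound via orthogonality to a spanning set forcing $v_{\min}=0$, and the lower bound by restricting the objective to (a slice of) $\mathrm{span}(\Omega(v_{\min}))^\perp$, observing that only the $u_i\notin\Omega(v_{\min})$ survive there, and invoking Lemma~\ref{prop} on that at-least-two-dimensional intersection to produce an additional annihilated $u_k$, contradicting maximality. The only cosmetic difference is that the paper works with the full orthogonal complement $\Omega^\perp$ rather than a two-dimensional slice of it, which changes nothing.
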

Recall $p$ is the dimension of the data.
The above lemma essentially states that the space that is spanned by the elements of $\Omega(v_{\min})$ is the orthogonal complement subspace of the one-dimensional space that is spanned by the vector $v_{\min}$.

One direct corollary of Lemma \ref{Omega_0} is that the cardinality of the set $\Omega(v_{\min})$ is at least $p-1$.
Consequently, the total number of possible sets (of $\Omega(v_{\min})$) is no more than $n \choose p-1$.
This inspires us to use Algorithm \ref{alg:vmin} to find $v_{\min}$ as well as $\Omega(v_{\min})$ if all the $u_j$\rq s are given.
Here suppose we are in the $k$-th loop where the $u_j$\rq s are known, which are $u_1^{(k)}, u_2^{(k)}, ..., u_n^{(k)}$.

\begin{algorithm}[htbp]
\caption{Find $v_{\min}$ and $\Omega(v_{\min})$ in the $k$-loop}
\label{alg:vmin}
\begin{algorithmic}[1]
\REQUIRE{Unit vectors $u_1^{(k)}, u_2^{(k)}, \dots, u_n^{(k)}\in S^{p-1}$ are given.}
\ENSURE {$v^{(k)}$ and $\Omega(v^{(k)})$.}
\FORALL {$(p-1)$ combination of $u_i^{(k)}$\rq s, denoted as $S^u_t$}
\WHILE {$\mbox{rank}(S^u_t)<p-1$}
\STATE Add another $u_j$ that is not in the set $S_t^u$;
\ENDWHILE
\STATE Find the orthogonal direction of the set $S_t^u$, which is one of the candidates of $v^{(k)}$, denoted as $v^{(k)}_t$,
 and calculate the value of $f(v^{(k)}_t) = \sum\limits_{i=1}^{n} \left|\left(u_i^{(k)}\right)^Tv^{(k)}_t \right|$.
 \ENDFOR
\STATE The $v^{(k)}_t$, that can make the value of $f(v^{(k)}_t)$ be the minimum among all the possible $f(v^{(k)}_t)$ values,
is the $v_{\min}$, which is denoted as $v^{(k)}$,
and the corresponding $S_t^u$ set is the set $\Omega(v_{\min})$, which is denoted as $\Omega(v^{(k)})$.
\end{algorithmic}
\end{algorithm}

From Lemma \ref{Omega_0} we can get the exact solution for the special case when the number of projection directions is equal to the dimension of the multivariate variables, which is described in the following theorem.
\begin{theorem}
\label{dim=n}
When the number of projections is equal to the dimension of the data, i.e., we have $n = p$, the optimal solution in \eqref{obj_f} satisfies the following condition:
\begin{equation}\label{optdim=n}
u_i^Tu_j=0, \forall i \not= j.
\end{equation}
The above is equivalent to stating that the set $\left\{u_1, u_2, \cdots, u_n\right\}$ forms an orthonormal basis in $\mathbb{R}^p$.
\end{theorem}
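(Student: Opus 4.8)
\medskip
\noindent\textbf{Proof plan.}
By Theorem~\ref{Cn} it suffices to solve the equivalent problem \eqref{obj_f}, i.e.\ to maximise $V_{\min}/V_{\max}$ over unit vectors $u_{1},\dots ,u_{n}$ with $n=p$. The plan is to establish two inequalities valid for \emph{every} admissible choice of directions, $V_{\max}\ge\sqrt{p}$ and $V_{\min}\le 1$, so that $V_{\min}/V_{\max}\le 1/\sqrt{p}$; then to check that an orthonormal basis attains $1/\sqrt{p}$; and finally to show that orthonormal bases are the \emph{only} configurations that do so.

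For the lower bound on $V_{\max}$ I would use Lemma~\ref{V_max}, which gives $V_{\max}^{2}=\max_{s\in\{1,-1\}^{n}}\bigl\|\sum_{i=1}^{n}s_{i}u_{i}\bigr\|^{2}$, and compare the maximum with the average over the $2^{n}$ sign patterns: since the cross terms cancel in the average, $\frac{1}{2^{n}}\sum_{s}\bigl\|\sum_{i}s_{i}u_{i}\bigr\|^{2}=\sum_{i}\|u_{i}\|^{2}=n=p$, hence $V_{\max}^{2}\ge p$. I would also record, for the converse direction, that equality $V_{\max}^{2}=p$ forces $\bigl\|\sum_{i}s_{i}u_{i}\bigr\|^{2}=p$ for \emph{every} sign pattern; expanding, the multilinear polynomial $\sum_{i<j}s_{i}s_{j}\,u_{i}^{T}u_{j}$ vanishes on $\{1,-1\}^{p}$, and a multilinear polynomial that vanishes on the whole hypercube is identically zero, so $u_{i}^{T}u_{j}=0$ for all $i\neq j$.

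For the upper bound on $V_{\min}$ I would exploit $n=p$: for any index $j$ the remaining $p-1$ vectors $\{u_{i}:i\neq j\}$ span a subspace of dimension at most $p-1$, so its orthogonal complement contains a unit vector $v^{\ast}$, and then $\sum_{i=1}^{n}|u_{i}^{T}v^{\ast}|=|u_{j}^{T}v^{\ast}|\le\|u_{j}\|=1$, whence $V_{\min}\le 1$. (This is Lemma~\ref{Omega_0} / Algorithm~\ref{alg:vmin} specialised to $n=p$: such $v^{\ast}$ is one of the candidate minimisers, and a single candidate suffices for the inequality.) Combining the two bounds, $V_{\min}/V_{\max}\le 1/\sqrt{p}$ for all admissible configurations. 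If $\{u_{1},\dots ,u_{p}\}$ is orthonormal, then writing a unit $v=\sum_{i}a_{i}u_{i}$ with $\|a\|_{2}=1$ gives $\sum_{i}|u_{i}^{T}v|=\|a\|_{1}$, which runs over $[1,\sqrt{p}]$ with both endpoints attained; hence $V_{\min}=1$, $V_{\max}=\sqrt{p}$, and the ratio equals $1/\sqrt{p}$. So the optimum in \eqref{obj_f} is $1/\sqrt{p}$, attained at every orthonormal basis; and conversely, if a configuration attains it, then $1/\sqrt{p}=V_{\min}/V_{\max}\le 1/V_{\max}\le 1/\sqrt{p}$ forces $V_{\max}=\sqrt{p}$, which by the previous paragraph gives $u_{i}^{T}u_{j}=0$ for all $i\neq j$, i.e.\ \eqref{optdim=n}. (Existence of a maximiser is automatic: $V_{\min}/V_{\max}$ is continuous and bounded below by a positive constant on the compact product of unit spheres.)

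The routine ingredients are the sign-averaging identity and the elementary norm comparisons $\|a\|_{2}\le\|a\|_{1}\le\sqrt{p}\,\|a\|_{2}$. The step I expect to be the main obstacle is the \emph{converse}, i.e.\ upgrading the scalar equality $V_{\max}^{2}=p$ to the full list of orthogonality relations $u_{i}^{T}u_{j}=0$; the hypercube/multilinear-polynomial argument sketched above is the clean route, and verifying that it really characterises \emph{all} optimal direction sets, rather than merely exhibiting orthonormal bases as \emph{some} of the optima, is the heart of the statement.
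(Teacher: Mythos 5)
Your proof is correct, but it takes a genuinely different route from the paper's. The paper channels everything through the dual directions $v^{-(1)},\dots,v^{-(p)}$ (each orthogonal to $p-1$ of the $u_j$'s, as supplied by Lemma~\ref{Omega_0}): it bounds $V_{\min}$ by the \emph{average} of $f$ over these $p$ candidates, bounds $V_{\max}$ from below by evaluating $f$ at a normalized signed sum of them, obtains $V_{\min}/V_{\max}\le \frac{1}{p}\min_{s}\bigl\|\sum_i s_i v^{-(i)}\bigr\|$, and only then applies the sign-averaging identity --- to the Gram matrix of the $v^{-(i)}$'s --- to cap that quantity by $\sqrt{p}$. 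You instead work directly on the $u_i$'s and decouple the two bounds: $V_{\max}^2\ge p$ by averaging $\bigl\|\sum_i s_iu_i\bigr\|^2$ over the $2^p$ sign patterns (via Lemma~\ref{V_max}), and $V_{\min}\le 1$ by exhibiting a single unit vector orthogonal to $p-1$ of the $u_i$'s. The same hypercube-averaging idea appears in both proofs, but applied to different objects and in different roles. What your version buys is a cleaner and genuinely airtight converse: equality forces $V_{\max}^2$ to coincide with the average, hence all $2^p$ values equal $p$, hence the multilinear form $\sum_{i<j}s_is_j\,u_i^Tu_j$ vanishes on the whole hypercube and so has all coefficients zero. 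The paper's uniqueness argument is weaker at this point --- it only shows that when some $u_i^Tu_j\ne 0$ there \emph{exists} a sign pattern with $s^T\Sigma s\le p$, which does not by itself rule out non-orthonormal optima --- so your route actually establishes the characterization more completely. What the paper's route buys is continuity with the rest of the development: its bounds are phrased in terms of the candidate set from Lemma~\ref{Omega_0} and Algorithm~\ref{alg:vmin}, the same machinery reused in the general-case algorithm. One small presentational note: your attainment check implicitly assumes, as the paper also does, that any $p-1$ of the $u_i$'s used to produce $v^{\ast}$ leave a nontrivial orthogonal complement, which is automatic since $p-1<p$; no linear-independence assumption is needed for the $V_{\min}\le 1$ bound, which is a further simplification over the paper's setup.
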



\section{Numerical approach in general cases}
\label{sec:general}

When $p>2$ and $n\neq p$, we propose an algorithm to identify the optimal projections $u_1, u_2, ..., u_n$, such that they solve \eqref{obj_f}.
Per Lemma \ref{V_max} and the definition of $s_i^{\max}$\rq s,
the $V_{\max}$ can be written as:
$$
V_{\max} = \left\|\sum\limits_{i=1}^ns_i^{\max}u_i\right\|.
$$
According to Lemma \ref{Omega_0}, we have
\begin{eqnarray*}
V_{\min} = \sum\limits_{i=1}^n\left|u_i^Tv_{\min}\right|
&=& \sum\limits_{u_i\in\Omega(v_{\min})}\left|u_i^Tv_{\min}\right|
+ \sum\limits_{u_i\not\in\Omega(v_{\min})}\left|u_i^Tv_{\min}\right| \\
&=& \sum\limits_{u_i\not\in\Omega(v_{\min})}\left|u_i^Tv_{\min}\right|.
\end{eqnarray*}
So when $u_1,\cdots, u_n$ are given, $\frac{V_{\min}}{V_{\max}}$ can be written as
\begin{equation}\label{VminVmax}
\frac{V_{\min}}{V_{\max}}
= \frac{\sum\limits_{u_i\not\in\Omega(v_{\min})}\left|u_i^Tv_{\min}\right|}{\left\|\sum\limits_{i=1}^ns_i^{\max}u_i\right\|},
\end{equation}
where $v_{\min}$ and $\Omega(v_{\min})$ are defined in Section \ref{sec:p=n}.
We assume that the set $\Omega(v_{\min})$ corresponds to the minimum over all $n \choose p-1$ possible sets, and ($s_i^{\max}$)\rq s maximize the norm of $\sum\limits_{i=1}^n s_i^{\max}u_i$.

We use a method that is similar to the coordinate descent algorithm \citep{nesterov2012efficiency,wright2015coordinate} to search for the optimal solutions of \eqref{obj_f}.
Details of our algorithm can be found in Algorithm \ref{alg:general}.
The optimal solution can be achieved in circular iterations: maximizing \eqref{VminVmax} with respect to one $u_i$, while the others are fixed.
We then iteratively maximize the objective function in \eqref{VminVmax} until the value of the objective function \eqref{VminVmax} cannot be increased.

We derive the iteration strategy in the following.
Let $v^{(k)}$ be the minimizer of $\sum\limits_{i=1}^n\left|u_i^Tv\right|$ at the $k$th iteration.
Let $\Omega^{(k)}$ denote the minimum over all $n \choose p-1$ possible sets at the $k$th iteration.
For any $u_j^{(k)}\not\in\Omega^{(k)}$, without loss of generality, we assume that $u_1\not\in\Omega^{(k)}$.
The objective function in \eqref{VminVmax} can be written as
\begin{equation}\label{VminVmax_iter}
\frac{V_{\min}}{V_{\max}}
= \frac{\left|u_1^Tv^{(k)}\right|
+\sum\limits_{i>1, u_i\not\in\Omega^{(k)}}\left|u_i^Tv^{(k)}\right|}{\left\|s_1^{\max}u_1
+ \sum\limits_{i=2}^ns_i^{\max}u_i\right\|}.
\end{equation}
Without loss of generality, we can assume $s_1^{\max} = 1$.
This is because, recalling that ($s_i^{\max}$)'s are binary, we have
$$
\left\|s_1^{\max}u_1 + \sum\limits_{i=2}^ns_i^{\max}u_i\right\|
= \left\|u_1 + \sum\limits_{i=2}^ns_1^{\max}s_i^{\max}u_i\right\|.
$$
The expression in \eqref{VminVmax_iter} can then be rewritten as
\begin{equation}
\label{eq:u1}
\frac{\left|u_1^Tv^{(k)}\right|+A}{\left\|u_1 +B\right\|},
\end{equation}
where
$$
A = \sum\limits_{i>1, u_i\not\in\Omega^{(k)}}\left|u_i^Tv^{(k)}\right|, \mbox{ and }
B = \sum\limits_{i=2}^ns_i^{\max}u_i.
$$
Note that quantities $A$ and $B$ do not depend on $u_1$.
Our objective is to derive a strategy to maximize the quantity in \eqref{eq:u1} as a function of the vector variable $u_1$.

We first solve a constrained version of the above maximization problem.
We define
$\Sigma(v, \theta) = \left\{x:\|x\| = 1, \left<x,v\right>=\theta\right\}$, for any fixed $\theta\in[0,\pi)$,
where $\left<\cdot,\cdot \right>$ denote the angle between two vectors.
Conditioning on $u_1 \in \Sigma(v, \theta)$, and $v=v^{(k)}$, maximizing the function in \eqref{eq:u1} is equivalent to maximizing the following function:
\begin{equation}\label{opt_u1}
\frac{\left|\cos\theta\right|+A}{\left\|u_1 +B\right\|}.
\end{equation}
Note that the numerator is not a function of $u_1$.
Consequently, it is equivalent to minimizing
$$
\left\|x+B\right\|, \mbox{ where } x\in\Sigma(v, \theta).
$$
The following lemma presents an analytical solution to the above minimization problem.
\begin{lemma}
\label{u_1}
Given a vector $B$, a constant $\theta \in [0, \pi)$, and a unit-norm vector $v$, the solution to the following problem
\begin{equation}\label{u1}
\min_{x:\|x\|=1,\left<x,v\right>=\theta}\left\|x+B\right\|^2
\end{equation}
is
\begin{equation}\label{x}
x = v\cos\theta + \frac{|\sin\theta|}{\sqrt{B^TB-(v^TB)^2}}\left[(v^TB)v-B\right].
\end{equation}
\end{lemma}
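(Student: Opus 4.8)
The plan is to convert the constrained minimization of $\|x+B\|^2$ into the minimization of a \emph{linear} functional over the ``circle'' $\Sigma(v,\theta)=\{x:\|x\|=1,\ \langle x,v\rangle=\theta\}$, and then solve that linear problem in closed form by an orthogonal decomposition together with the Cauchy--Schwarz inequality.

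First I would normalize the constraint: since $\|x\|=\|v\|=1$, the angular condition $\langle x,v\rangle=\theta$ is equivalent to $v^Tx=\cos\theta$. Expanding on the feasible set, $\|x+B\|^2=\|x\|^2+2x^TB+\|B\|^2=1+\|B\|^2+2x^TB$, so \eqref{u1} is equivalent to minimizing $x^TB$ subject to $\|x\|=1$ and $v^Tx=\cos\theta$. Next, split $B$ into its components along and orthogonal to $v$: write $B=(v^TB)v+B_\perp$ with $B_\perp=B-(v^TB)v\perp v$, so that $\|B_\perp\|^2=B^TB-(v^TB)^2$. Likewise parametrize a feasible $x$ as $x=(\cos\theta)v+w$ with $w\perp v$; feasibility of $x$ is then \emph{equivalent} to $\|w\|^2=1-\cos^2\theta=\sin^2\theta$, i.e.\ $\|w\|=|\sin\theta|$. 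Because $v^TB_\perp=0$ and $w^Tv=0$, we get $x^TB=(v^TB)\cos\theta+w^TB_\perp$, in which only the last term depends on $w$.

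It then remains to minimize $w^TB_\perp$ over $w\perp v$ with $\|w\|=|\sin\theta|$. By Cauchy--Schwarz, $w^TB_\perp\ge-\|w\|\,\|B_\perp\|=-|\sin\theta|\,\|B_\perp\|$, with equality exactly when $w=-|\sin\theta|\,B_\perp/\|B_\perp\|$ (here one uses $B_\perp\neq 0$, i.e.\ $B$ is not parallel to $v$; this is precisely what makes the denominator in \eqref{x} nonzero). Substituting this optimal $w$ and using $\|B_\perp\|=\sqrt{B^TB-(v^TB)^2}$ yields
$$
x=(\cos\theta)v-|\sin\theta|\,\frac{B-(v^TB)v}{\sqrt{B^TB-(v^TB)^2}}
= v\cos\theta+\frac{|\sin\theta|}{\sqrt{B^TB-(v^TB)^2}}\bigl[(v^TB)v-B\bigr],
$$
which is exactly \eqref{x}. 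To finish, I would verify that this candidate is feasible: $v^Tx=\cos\theta$ since $v^TB_\perp=0$, and $\|x\|^2=\cos^2\theta+\sin^2\theta=1$; hence it is both feasible and, by the Cauchy--Schwarz bound, globally optimal.

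There is no deep obstacle: the argument is a one-line reduction to a linear objective followed by Cauchy--Schwarz. The only points needing care are the bookkeeping of $|\sin\theta|$ (for $\theta\in[0,\pi)$ one has $\sin\theta\ge 0$, but keeping the modulus makes the formula uniform), and the degenerate case $B\parallel v$, in which the minimizer is non-unique and the stated closed form is undefined; this case does not occur in the application, where $B_\perp\neq 0$. An alternative route via Lagrange multipliers (stationarity forces $x\in\mathrm{span}\{v,B\}$, then solve a $2$-dimensional problem) gives the same answer, but the decomposition argument has the advantage of establishing global optimality directly.
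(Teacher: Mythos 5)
Your proposal is correct, and it shares the paper's first step: both expand $\|x+B\|^2=1+\|B\|^2+2x^TB$ on the feasible set and reduce the problem to minimizing the linear functional $x^TB$ subject to $\|x\|=1$, $v^Tx=\cos\theta$. Where you diverge is in how that reduced problem is solved. The paper asserts that the minimizer must lie in the plane spanned by $v$ and $B$ (``the three points $x^*,v,B$ must be on a same plane''), writes $x^*=av+bB$, and then solves explicitly for $a,b$ from the two constraints, picking the sign of $b$ that minimizes the objective. You instead decompose $B=(v^TB)v+B_\perp$ and $x=(\cos\theta)v+w$ with $w\perp v$, $\|w\|=|\sin\theta|$, and apply Cauchy--Schwarz to $w^TB_\perp$. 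Your route buys two things: it proves, rather than asserts, that the optimum lies in $\mathrm{span}\{v,B\}$ (the coplanarity claim in the paper is correct but is stated without justification), and the equality case of Cauchy--Schwarz directly certifies global optimality, whereas the paper's sign choice for $b$ is only argued tersely. You also correctly flag the degenerate case $B\parallel v$ (where $B^TB-(v^TB)^2=0$ and the closed form is undefined), which the paper passes over silently; noting that this case does not arise in the algorithm's application is the right way to handle it. The paper's coordinate computation and your decomposition yield the identical formula, since $a=\cos\theta-bv^TB$ and $b=-|\sin\theta|/\sqrt{B^TB-(v^TB)^2}$ give $x=(\cos\theta)v+b\left[B-(v^TB)v\right]$.
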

Using the solution in \eqref{x} to substitute the  $u_1$ in \eqref{opt_u1}, we have
\begin{equation}\label{opt_u1_2}
\frac{\left|\cos\theta\right|+A}{\left\|u_1 +B\right\|}
= \frac{\left|\cos\theta\right|+A}{\left\|v\cos\theta +B
+ \frac{|\sin\theta|}{\sqrt{B^TB-(v^TB)^2}}\left[(v^TB)v-B\right]\right\|}.
\end{equation}
Maximizing \eqref{opt_u1} with respect to $\theta$ is equivalent to maximizing \eqref{opt_u1_2}.
For fixed $A$, $B$, and $v$, the right hand side of \eqref{opt_u1_2} is a function of $\theta$.
The following Theorem \ref{gtheta} gives the solution to the above problem.
\begin{theorem}
\label{gtheta}
The solutions of maximizing \eqref{opt_u1} with respect to $\theta$ are the zeros of the following function:
\begin{equation}\label{eq:g}
g(\theta) =
\left\{ \begin{array}{ll}
\sqrt{B^TB}\left[\cos\alpha
+ A\cos(\alpha-\theta) - \sin\theta\sin(\alpha - \theta)\right] & \\
-(1+B^TB)\sin\theta, & \mbox{ if } \theta\in[0,\frac{\pi}{2}),\\
\sqrt{B^TB} \left[-\cos\alpha + A\cos(\alpha-\theta) + \sin\theta\sin(\alpha - \theta)\right] & \\
+ (1+B^TB)\sin\theta  &\mbox{ if } \theta\in[\frac{\pi}{2},\pi),
\end{array}
\right.
\end{equation}
where $\alpha$ satisfies
$\sin\alpha = \frac{v^TB}{\sqrt{B^TB}},$ and
$\cos\alpha = \frac{\sqrt{B^TB-(v^TB)^2}}{\sqrt{B^TB}}.$
\end{theorem}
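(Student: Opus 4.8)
The plan is to substitute the optimal $u_1$ supplied by Lemma \ref{u_1} back into \eqref{opt_u1}, which turns the problem into the maximization of a single-variable function of $\theta$ on $[0,\pi)$, and then to show that its first-order optimality condition is precisely $g(\theta)=0$.

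\emph{Step 1: a closed form for the denominator.} Let $x$ be the vector in \eqref{x}; by Lemma \ref{u_1} it minimizes $\|x+B\|^2$ over $\Sigma(v,\theta)$, and it satisfies $\|x\|=1$ and $x^Tv=\cos\theta$. Hence $\|x+B\|^2 = 1 + B^TB + 2x^TB$. A direct computation from \eqref{x}, using $v^T[(v^TB)v-B]=0$ and $\|(v^TB)v-B\|^2 = B^TB-(v^TB)^2$, gives
\[
x^TB \;=\; \cos\theta\,(v^TB) - |\sin\theta|\,\sqrt{B^TB-(v^TB)^2}.
\]
Inserting the definitions $\sin\alpha = (v^TB)/\sqrt{B^TB}$ and $\cos\alpha = \sqrt{B^TB-(v^TB)^2}/\sqrt{B^TB}$, and using $|\sin\theta|=\sin\theta$ for $\theta\in[0,\pi)$, this collapses to $x^TB = \sqrt{B^TB}\,\sin(\alpha-\theta)$. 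Therefore the right-hand side of \eqref{opt_u1_2}, viewed as a function of $\theta$ only, equals
\[
h(\theta) \;=\; \frac{|\cos\theta|+A}{\sqrt{\,1+B^TB+2\sqrt{B^TB}\,\sin(\alpha-\theta)\,}}, \qquad \theta\in[0,\pi).
\]

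\emph{Step 2: the first-order condition.} Write $N(\theta)=|\cos\theta|+A$ and $M(\theta)=1+B^TB+2\sqrt{B^TB}\,\sin(\alpha-\theta)=\|x+B\|^2>0$, so $h=N/\sqrt{M}$ and $h'=(2N'M-NM')/(2M^{3/2})$; thus the sign of $h'$ is that of $2N'M-NM'$, and $M'(\theta)=-2\sqrt{B^TB}\,\cos(\alpha-\theta)$. Since $N$ is smooth on $(0,\tfrac{\pi}{2})$ and on $(\tfrac{\pi}{2},\pi)$ separately ($N=\cos\theta+A$, $N'=-\sin\theta$ on the first; $N=-\cos\theta+A$, $N'=\sin\theta$ on the second), we handle the two intervals in turn. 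Plugging these into $2N'M-NM'$ and using the identity $\cos\theta\cos(\alpha-\theta)-\sin\theta\sin(\alpha-\theta)=\cos\alpha$ to eliminate the mixed term $\cos\theta\cos(\alpha-\theta)$, the expression simplifies on each piece to $2N'M-NM' = 2g(\theta)$ with $g$ exactly as in \eqref{eq:g}. Consequently $h'(\theta)=g(\theta)/M^{3/2}$, so the interior critical points of $h$ are precisely the zeros of $g$; the maximizer of \eqref{opt_u1} over $\theta$ is then the zero of $g$ (compared, where relevant, against the endpoint $\theta=0$ and the non-smooth point $\theta=\tfrac{\pi}{2}$) yielding the largest value of $h$.

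\emph{Main obstacle.} The delicate part is Step 2: the absolute values $|\cos\theta|$ and $|\sin\theta|$ force the case split $[0,\tfrac{\pi}{2})$ versus $[\tfrac{\pi}{2},\pi)$, and within each case the trigonometric bookkeeping must be arranged so that the derivative reduces exactly to the stated closed form. One should also flag the degenerate configurations ($B=0$, or $v$ parallel to $B$) for which $\alpha$ and \eqref{x} are undefined; these are excluded by the genericity implicit in the iterative construction, and can be disposed of separately if needed.
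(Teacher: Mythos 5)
Your proposal is correct and follows essentially the same route as the paper: substitute the minimizer from Lemma \ref{u_1} back in, simplify the denominator to $\sqrt{1+B^TB+2\sqrt{B^TB}\sin(\alpha-\theta)}$, and identify the zeros of the derivative's numerator on the two subintervals with $g(\theta)$. Your added remarks about endpoint/kink comparison and the degenerate cases $B=0$ or $v\parallel B$ are sensible refinements that the paper's own proof glosses over, but they do not change the argument.
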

The above theorem indicates that one can adopt a line search algorithm to compute for $\theta$.

Based on all the above, the Algorithm \ref{alg:general} (below) furnishes a coordinate ascent scheme to maximize the objective in \eqref{obj_f}.
\begin{algorithm}[htbp]
\caption{Optimal projection algorithm}
\label{alg:general}
\begin{algorithmic}[1]
\REQUIRE {Set a threshold $\Delta>0$, initial unit vectors $u_1^{(0)}, u_2^{(0)}, ..., u_n^{(0)}\in S^{p-1}$.
Thus, by Algorithm \ref{alg:s} and \ref{alg:vmin}, we can get the corresponding values $v^{(0)}$,$\Omega^{(0)}(v^{(0)})$, and $s_i^{(0)}$\rq s.}
\REPEAT
\STATE In the $k$-th loop, suppose the previous $u_1^{(k-1)}, u_2^{(k-1)}, \ldots, u_n^{(k-1)}$ are known.
\FORALL{$u_j^{(k-1)}\not\in\Omega^{(k-1)}(v^{(k-1)})$}

\STATE Find the zeros of the function $g(\theta)$ in \eqref{eq:g} in Theorem \ref{gtheta}, where
$v = v^{(k-1)},$
$B = \sum\limits_{i\not=j}s_j^{(k-1)} s_i^{(k-1)} u_i^{(k-1)}$,
and denote the zeros as $\theta^*$.
\STATE According to Lemma \ref{u_1}, the new $u_j^{(k)}$ would be
$v\cos\theta^* + \frac{|\sin\theta^*|}{\sqrt{B^TB-(v^TB)^2}}\left[(v^TB)v-B\right].$
\STATE By Algorithm \ref{alg:s} and \ref{alg:vmin}, we can get the corresponding values $v^{(k)}$,$\Omega^{(k)}(v^{(k)})$, and $s_i^{(k)}$\rq s,
based on the newly updated $u_j$\rq s, which also give us the value of $V_{\min}$ and $V_{\max}$.
\STATE Compute $V_{\min}/V_{\max}$.
\ENDFOR
\STATE Pick the $u_j^{(k)}\not\in\Omega^{(k-1)}(v^{(k-1)})$ that gives the maximal value of $V_{\min}/V_{\max}$ in the above loop.

\IF {The value of $V_{\min}/V_{\max}$ decreases}
\STATE Go back to $u_j^{(k-1)}$.
\ENDIF
\UNTIL{The increment of $V_{\min}/V_{\max}$ is less than $\Delta$}.
\end{algorithmic}
\end{algorithm}


\section{Simulations}
\label{sec:simulation}

In the previous section, the optimal projections for both the special cases and the general case are provided.
The simulations will follow the same order.
The simulations are about the comparison of the Monte Carlo method and our method for the special cases and then for a general case.

According to \cite{huang2017efficient}, Monte Carlo method is to  select some random directions, denoted as $w_i$, $i=1, \dots, n$, on the unit sphere $S^{p-1}$ and project the vector we would like to estimate, that is $v$, along these directions, so the norm of the vector $v$ could be estimated as
$$
\|v\| \approx C^\prime_p\frac{1}{n}\sum\limits_{i=1}^n |w_i^Tv|,$$
where $C^\prime_p = \frac{\sqrt{\pi}\Gamma(\frac{p+1}{2})}{\Gamma(\frac{p}{2})}$.

In all the experiments, we randomly select $100$ unit vectors on the sphere as the vectors that we would like to estimate, in order to get the mean squared error for comparison between the Monte Carlo method and the method we propose.

\subsection{When the dimension is $2$}

When the dimension is equal to $2$, the exact solution can be found as well as the mean squared error rate.
So we randomly select $100$ unit vectors on the sphere as the vectors that we would like to estimate.
For both the Monte Carlo method and our optimal projection method, we calculate the mean squared error over these $100$ vectors.
More specifically, the squared error between the true norm of the vector, which is $1$, and the estimated norm is calculated for each of the $100$ unit vectors when the number of directions is fixed.
By taking the mean of the $100$ squared errors from the previous step, we get the mean squared error for given number of directions.
The number of directions used in our simulation is from $2$ to $10000$.
Figure \ref{fig:dim2} shows the comparison between our method and Monte Carlo method regarding the logarithm of the mean squared error and the number of projection directions.
From the figure, we can see that our method performs better than the Monte Carlo, and the advantage becomes more obvious when the number of projection directions increases.
\begin{figure}[htbp]
\centering
\includegraphics[width=0.7\textwidth]{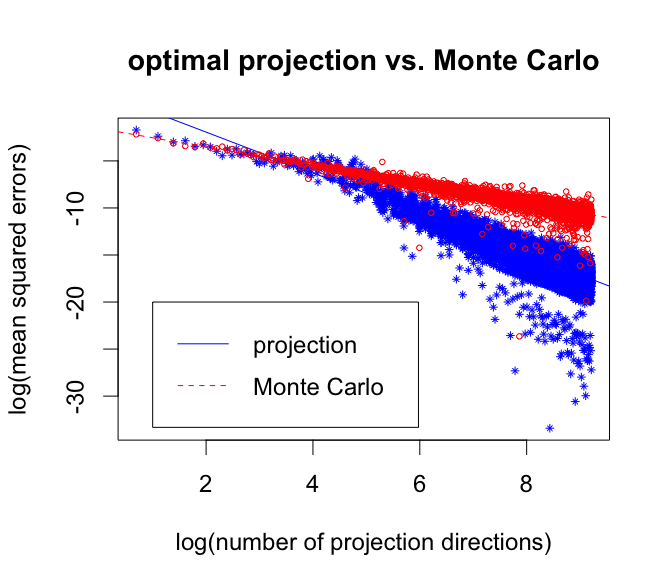}
\caption{Optimal projection vs. Monte Carlo in the 2 dimensional case}
\label{fig:dim2}
\end{figure}

\subsection{When we have $n=p$}
When the dimension $p$ is equal to the number of projection directions $n$, recall that in Theorem \ref{dim=n},
we give the exact solution of the pre-specified directions.
Similar to what we have done in the $2$-dimensional case, we randomly select 100 unit vectors on the sphere $S^{p-1}$,
with dimension $p$ varying from $8$ to $11$.
So the number of projection directions is varying from $8$ to $11$ correspondingly.
We calculate the mean squared error of both the Monte Carlo method and our optimal projection method for each $p$ using the same strategy as before.
The details are in the Figure \ref{fig:np}, where the $x$-axis represents the dimension, and $y$-axis represents the mean squared error.
\begin{figure}[htbp]
\centering
\includegraphics[width=0.7\textwidth]{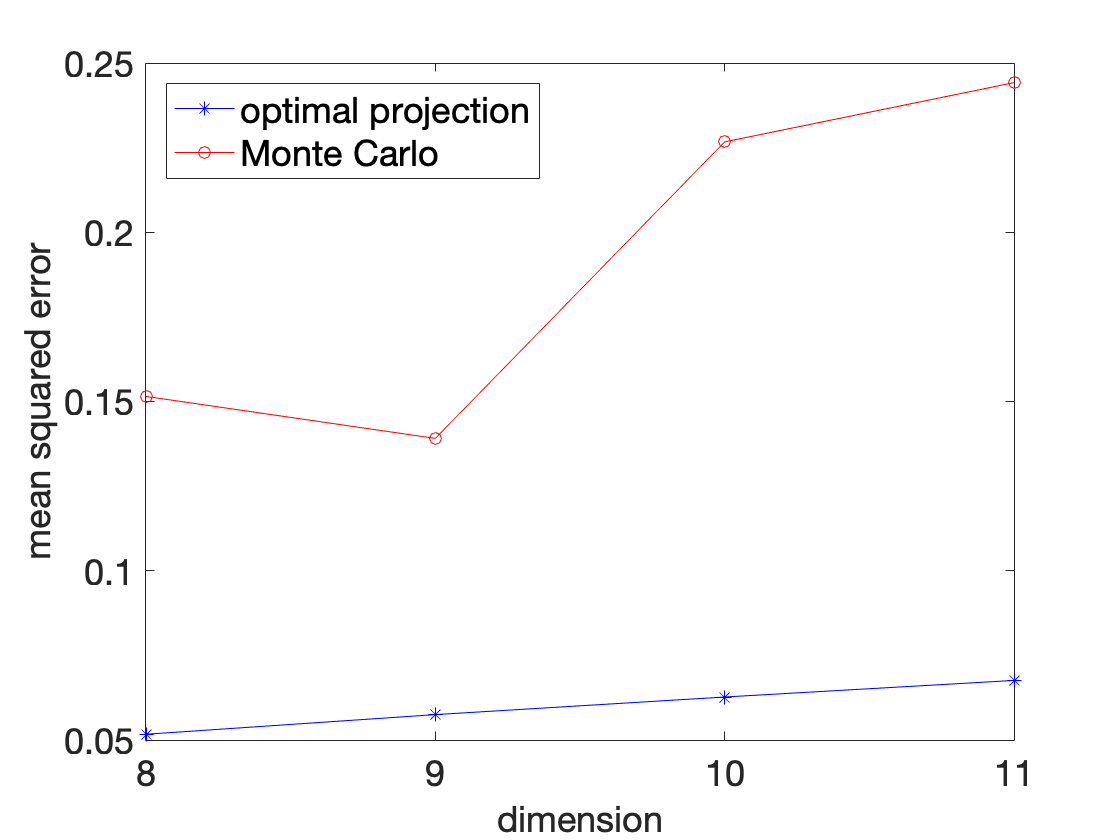}
\caption{Optimal projection vs. Monte Carlo in the $n=p$ case}
\label{fig:np}
\end{figure}

\subsection{General setting: $n > p$}
When the dimension $p$ is larger than 2 and $n\not=p$, the exact solution of \eqref{obj_f} can not be obtained.
Therefore, we adopt the Algorithm \ref{alg:general}.
Like in previous simulations, we randomly select 100 unit vectors on the sphere $S^{p-1}$, with dimension $p$ varying from $3$ to the number of directions minus $1$, and the fixed number of directions to be 8, 9, 10, 11, respectively, and calculate the mean squared error of both the Monte Carlo method and our optimal projection method for each $p$ using the same strategy as before.
Figure \ref{fig:n8}, \ref{fig:n9}, \ref{fig:n10} and \ref{fig:n11} show the comparison, where the $x$-axis represents the dimension, and $y$-axis represents the mean squared error.

\begin{figure}[htbp]
\centering
\includegraphics[width=0.7\textwidth]{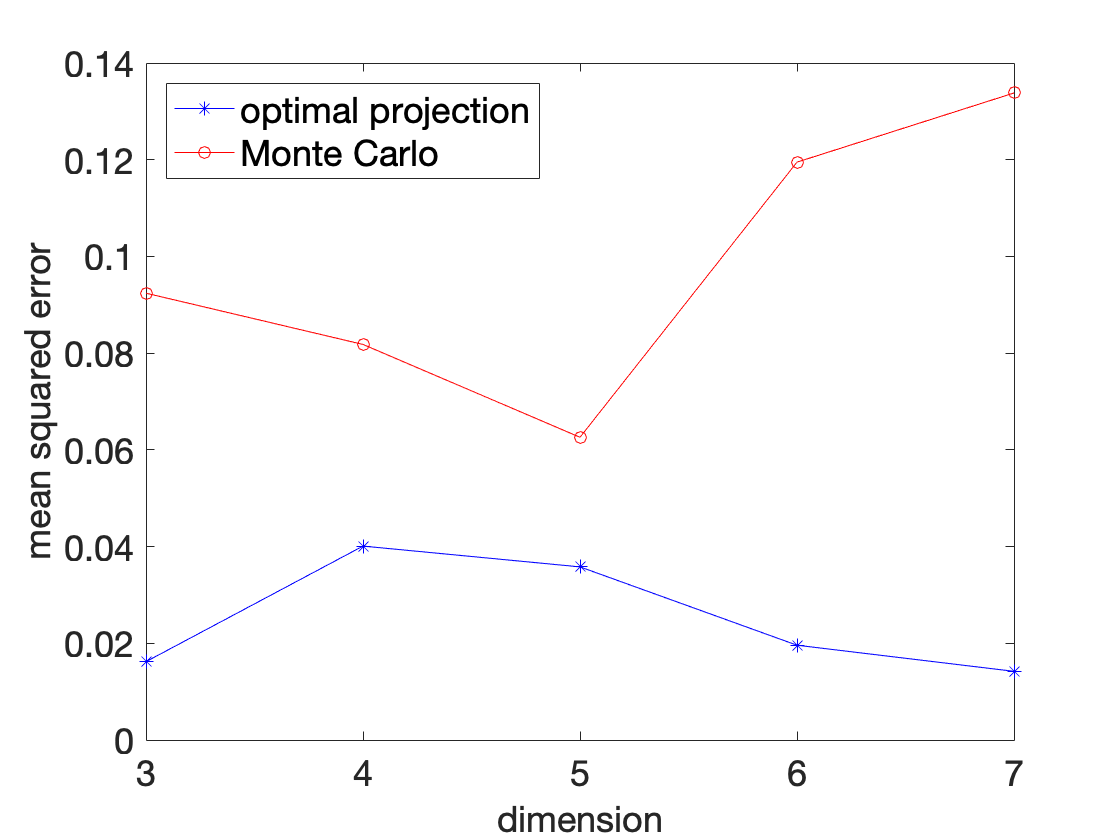}
\caption{Optimal projection vs. Monte Carlo for dimension varying from $3$ to $7$ in the case $n = 8$}
\label{fig:n8}
\end{figure}

\begin{figure}[htbp]
\centering
\includegraphics[width=0.7\textwidth]{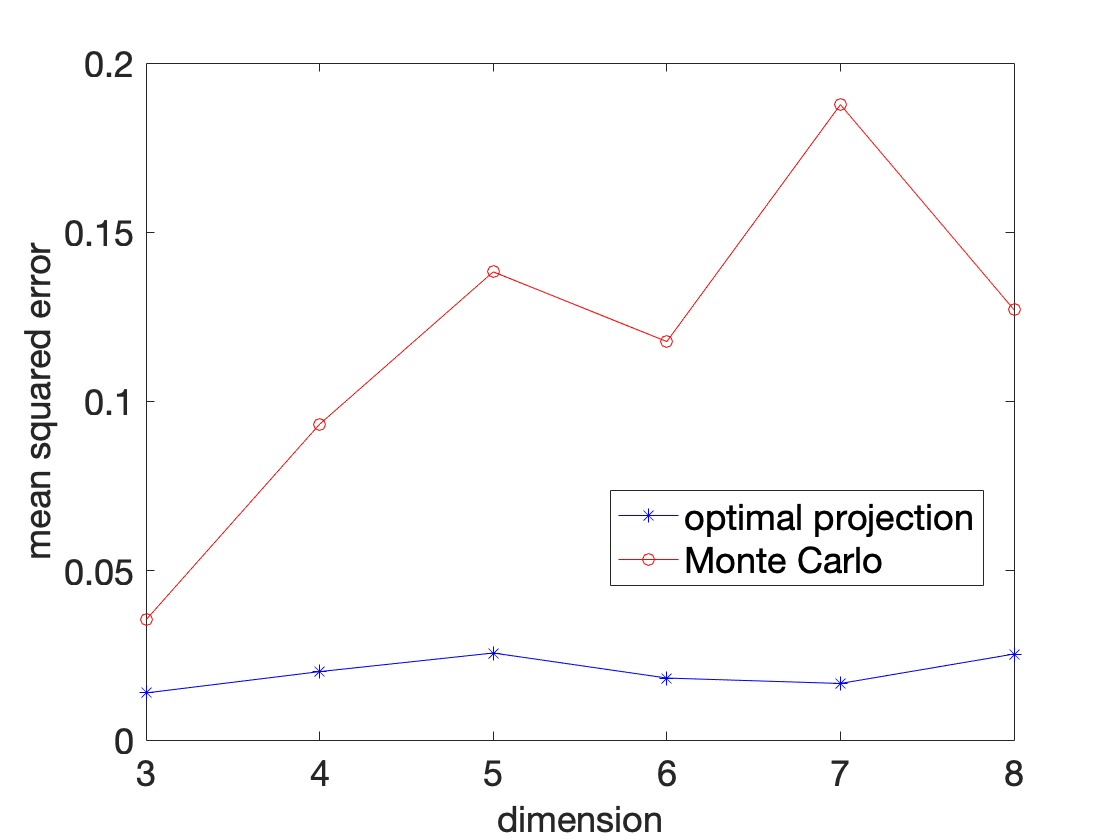}
\caption{Optimal projection vs. Monte Carlo for dimension varying from $3$ to $8$ in the case $n = 9$}
\label{fig:n9}
\end{figure}

\begin{figure}[htbp]
\centering
\includegraphics[width=0.7\textwidth]{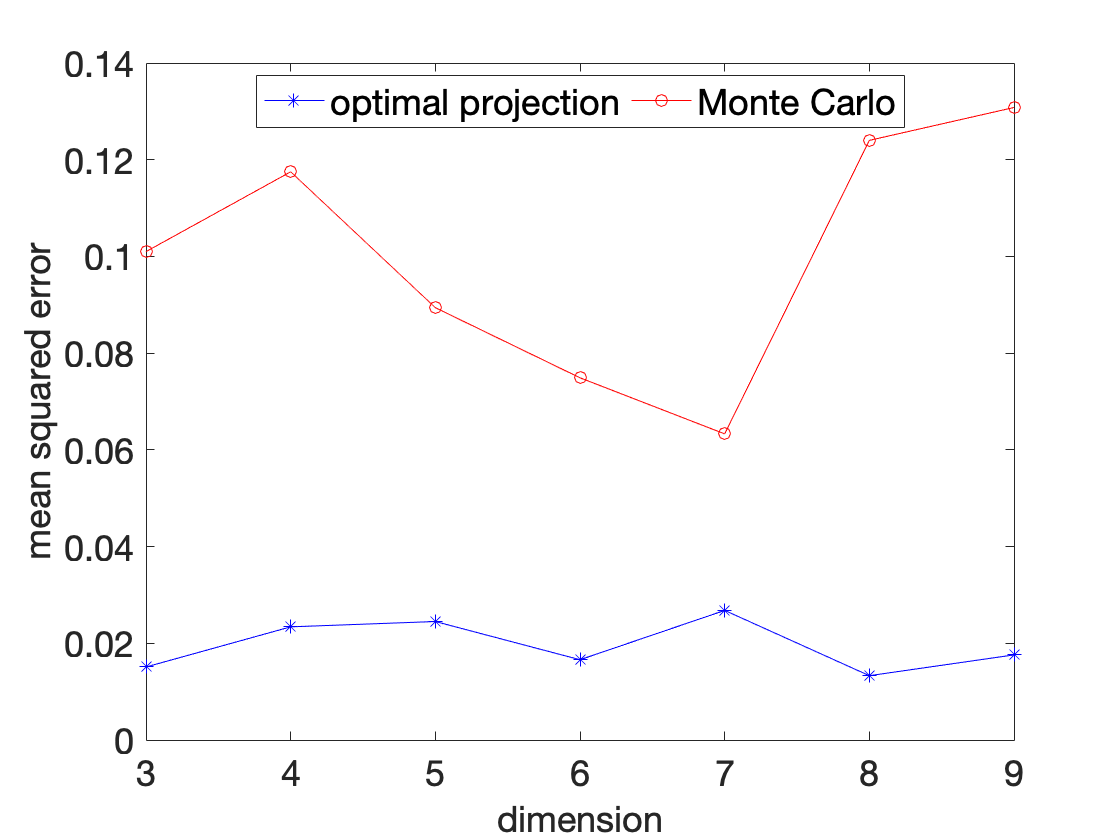}
\caption{Optimal projection vs. Monte Carlo for dimension varying from $3$ to $9$ in the case $n = 10$}
\label{fig:n10}
\end{figure}

\begin{figure}[htbp]
\centering
\includegraphics[width=0.7\textwidth]{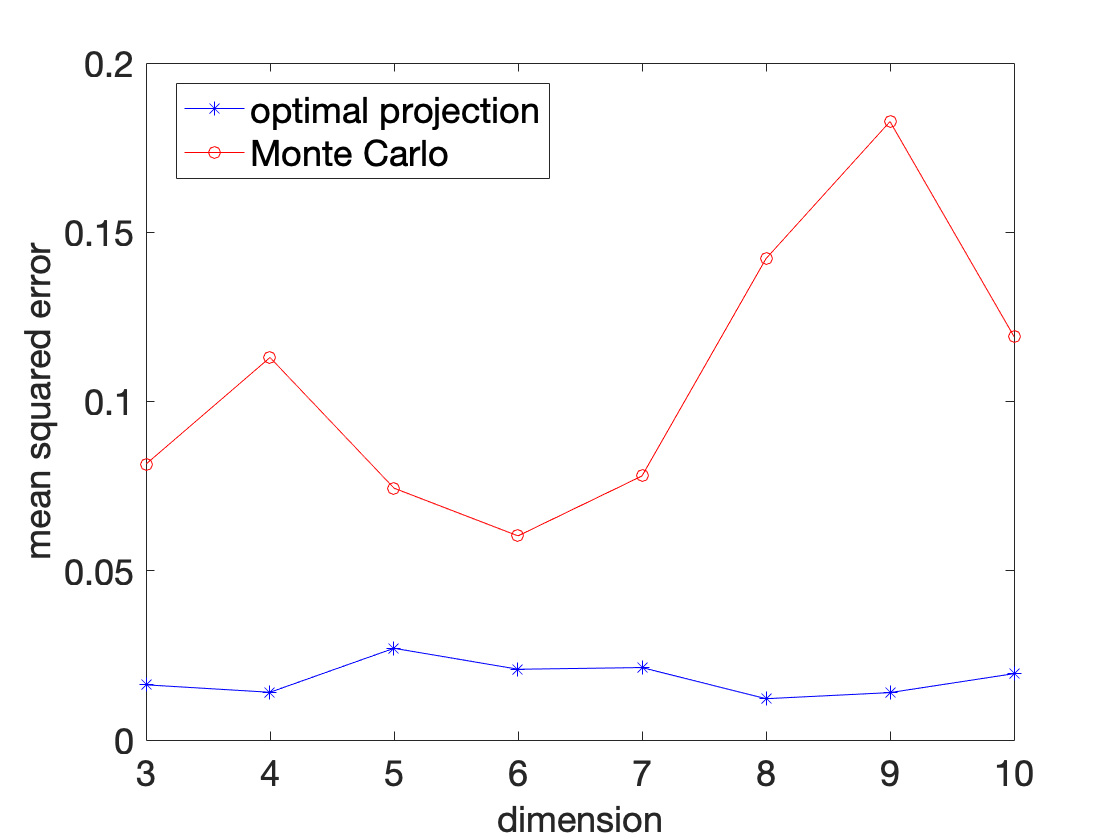}
\caption{Optimal projection vs. Monte Carlo for dimension varying from $3$ to $10$ in the case $n = 11$}
\label{fig:n11}
\end{figure}

Overall, we can see that our method performs better than the Monte Carlo method.


\section{Conclusion}
\label{sec:conc}

We propose a new method to calculate the distance, which is critical in computing the distance-based statistics, and can also be utilized in the calculation of the kernel functions that are distance-based and smooth.
The main idea is to use the sum of the norms of the projections along a set of pre-calculated directions to approximate the original norm.
By doing so, one can utilize the fast algorithm for univariate variables that has been proposed by \citet{huo2016fast}.
The advantage is that the computational complexity is reduced from $O(m^2)$ to $O(m \mbox{log}(m))$ where $m$ is the sample size.
These pre-specified directions can be found by minimizing the difference between the estimated distance and the true value in the worst case.
The associated problem is eventually a nonconvex optimization problem.
We derive the exact solutions when dimension is equal to either $2$ or the number of projection directions.
In general cases, we propose an algorithm to find the projection directions.
The simulations show the advantage of the proposed method versus the pure Monte Carlo approach, via comparing the mean squared errors.


\begin{acknowledgement}
This project is partially supported by the Transdisciplinary Research Institute for Advancing Data Science (TRIAD), http://
triad.gatech.edu, which is a part of the TRIPODS program at NSF and locates at Georgia Tech, enabled by the NSF grant
CCF-1740776.
Both authors are also partially supported by the NSF grant DMS-1613152.

\end{acknowledgement}
\section{Appendix}
\addcontentsline{toc}{section}{Appendix}
\label{sec:appendix}

All the proofs are included in this section, including
a proof of Theorem \ref{Cn} (Section \ref{sec:app:proof-1})),
a proof of Theorem \ref{dim2} (Section \ref{sec:app:proof-2}),
a proof of Theorem \ref{ese_O} (Section \ref{sec:app:proof-3}),
a proof of Theorem \ref{ese_M} (Section \ref{sec:app:proof-4}),
a proof of Lemma \ref{V_max} (Section \ref{sec:app:proof-5}),
a proof of Lemma \ref{prop} (Section \ref{sec:app:proof-6}),
a proof of Lemma \ref{Omega_0} (Section \ref{sec:app:proof-7}),
a proof of Theorem \ref{dim=n} (Section \ref{sec:app:proof-8}),
a proof of Lemma \ref{u_1} (Section \ref{sec:app:proof-9}), and
a proof of Theorem \ref{gtheta} (Section \ref{sec:app:proof-10}).
Some of these proofs involves detailed and potentially tedious derivations.
We try to furnish as much details as deemed reasonable.

\subsection{Proof of Theorem \ref{Cn}} \label{sec:app:proof-1}

\begin{proof}
By definition of $V_{\min}$ and $V_{\max}$, we have
$$C_nV_{\min}-1\leq C_n\sum_{i=1}^n\left|u_i^Tv\right|-1\leq C_nV_{\max}-1.$$
The above leads to the following
\begin{equation}\label{max}
\max\limits_{v: ||v||_2=1}\left|C_n\sum_{i=1}^n|u_i^Tv|-1\right|
= \max\left\{|C_nV_{\min}-1|, |C_nV_{\max}-1|\right\}.
\end{equation}
Consider the right hand side of the above as a function of $C_n$, it is verifiable that the minimum is achieved when
$$
1-C_nV_{\min}=C_nV_{\max}-1\mbox{, which leads to, }C_n = \frac{2}{V_{\min}+V_{\max}}.
$$
Bringing the above to \eqref{max}, we have
\begin{equation}\label{max_2}
\left|\frac{2}{V_{\min}+V_{\max}}V_{\min}-1\right|
= \frac{V_{\max}-V_{\min}}{V_{\max}+V_{\min}}
= \frac{2}{1+\frac{V_{\min}}{V_{\max}}}-1.
\end{equation}
From the above, it is evident that minimizing the right hand of \eqref{max_2} is equivalent to the following
$$
\max\limits_{u_1,...,u_n: ||u_i||_2=1}\frac{V_{\min}}{V_{\max}}.
$$
From all the above, the lemma is proved. \qed
\end{proof}


\subsection{Proof of Theorem \ref{dim2}}
\label{sec:app:proof-2}
\begin{proof}
Without loss of generality, we assume $\theta_i = \alpha_i + k_i\pi, \mbox{ where }\alpha_1\leq\alpha_2\leq ...\leq\alpha_n\in[0,\pi).$ Then the problem in \eqref{obj_f} can be written as
$$
\max\limits_{\alpha_i: i=1,...,n}\frac{\min\limits_\theta f(\theta)}{\max\limits_\theta f(\theta)},
$$
where $f(\theta) = \sum\limits_{i =1}^n|cos(\alpha_i-\theta)|.$

Let $\delta_i = \alpha_{i+1} - \alpha_i, i = 1, ..., n-1, \mbox{ and } \delta_n = \alpha_1 - \alpha_n + \pi.$
We have
$$
\sum\limits_{i=1}^n\delta_i = \pi.
$$
For given $\alpha_i$, the minimum and the maximum of $f(\theta)$ satisfy
\begin{eqnarray}
\frac{1}{n}\min\limits_\theta f(\theta) &\leq& \frac{1}{n}f(\alpha_i-\frac{\pi}{2}), \mbox{ for } i = 1, \dots, n, \label{eq:f1}\\
\frac{1}{n}\max\limits_\theta f(\theta) &\geq& \frac{1}{n}f\left(\frac{\alpha_i+\alpha_{i+1}}{2}-\frac{\pi}{2}\right), \mbox{ for } i = 1, \dots, n-1, \label{eq:f2-1}\\
\frac{1}{n}\max\limits_\theta f(\theta) &\geq& \frac{1}{n}f\left(\frac{\alpha_n+\alpha_1}{2}\right). \label{eq:f2-2}
\end{eqnarray}

By summing up each side of \eqref{eq:f1} with $i$ from $1$ through $n$, we get
\begin{equation}\label{eq:minf}
\min\limits_\theta f(\theta) \leq \frac{1}{n}\sum\limits_{i =1}^nf(\alpha_i-\frac{\pi}{2}).
\end{equation}
By summing up each side of \eqref{eq:f2-1} with $i$ from $1$ through $n-1$ and adding it to \eqref{eq:f2-2}, we have
\begin{equation}\label{eq:maxf}
\max\limits_\theta f(\theta) \geq \frac{1}{n}\left[\sum\limits_{i=1}^{n-1}f\left(\frac{\alpha_i+\alpha_{i+1}}{2}-\frac{\pi}{2}\right) + f\left(\frac{\alpha_n+\alpha_1}{2}\right)\right].
\end{equation}
Based on \eqref{eq:minf} and \eqref{eq:maxf}, for given $\alpha_i$, we have
$$
\frac{\min\limits_\theta f(\theta)}{\max\limits_\theta f(\theta)} \leq \frac{\frac{1}{n}\sum\limits_{i =1}^nf(\alpha_i-\frac{\pi}{2})}{\frac{1}{n}\left[\sum\limits_{i=1}^{n-1}f\left(\frac{\alpha_i+\alpha_{i+1}}{2}-\frac{\pi}{2}\right) + f\left(\frac{\alpha_n+\alpha_1}{2}\right)\right]}.
$$

Therefore, one can verify the following:
\begin{eqnarray}
\max\limits_{\alpha_i: i=1,...,n}\frac{\min\limits_\theta f(\theta)}{\max\limits_\theta f(\theta)}&\leq&
\max\limits_{\alpha_i: i=1,...,n}\frac{\frac{1}{n}\sum\limits_{i =1}^nf(\alpha_i-\frac{\pi}{2})}{\frac{1}{n}\left[\sum\limits_{i=1}^{n-1}f\left(\frac{\alpha_i+\alpha_{i+1}}{2}-\frac{\pi}{2}\right) + f\left(\frac{\alpha_n+\alpha_1}{2}\right)\right]} \nonumber \\
& = &\max\limits_{\alpha_i: i=1,...,n}\frac{\sum\limits_{i =1}^nf(\alpha_i-\frac{\pi}{2})}{\left[\sum\limits_{i=1}^{n-1}f\left(\frac{\alpha_i+\alpha_{i+1}}{2}-\frac{\pi}{2}\right) + f\left(\frac{\alpha_n+\alpha_1}{2}\right)\right]}. \label{less}
\end{eqnarray}

Denote the numerator of the right hand side of \eqref{less} as $N_n$, and the denominator as $D_n$.
Thus, we have
\begin{equation*}
N_n =
\left\{ \begin{array}{ll}
2\sum\limits_{i=1}^n\left|\sin\delta_i\right| + 2\sum\limits_{i=1}^{n-2}\left|\sin(\delta_i+\delta_{i+1})\right| + 2\sum\limits_{i=1}^{n-3}\left|\sin(\delta_i+\delta_{i+1} +\delta_{i+2})\right| + ... & \\
+ 2\sum\limits_{i=1}^{2}\left|\sin(\delta_i+\delta_{i+1} +... +\delta_{i+n-3})\right|, & \mbox{ if }n\geq 4,\\
2\sum\limits_{i=1}^n\left|\sin\delta_i\right| & \mbox{ if }n=3;
\end{array}
\right.
\end{equation*}
and
\begin{eqnarray*}
D_n &= &\sum\limits_{i=1}^n2\left|\sin\frac{\delta_i}{2}\right| + \sum\limits_{i=2}^{n-1}\sum\limits_{j=1}^{i-1}\left|\sin\left(\frac{\delta_i}{2} + \delta_{i-1} +\delta_{i-2} +...+\delta_j\right)\right| \\
&&+ \sum\limits_{i=1}^{n-2}\sum\limits_{j=i+1}^{n-1}\left|\sin\left(\frac{\delta_i}{2}+\delta_{i+1} + ... +\delta_{j}\right)\right|
+ \sum\limits_{j=2}^{n-1}\left|\sin\left(\frac{\delta_n}{2}+\delta_1 + ... +\delta_{j-1}\right)\right|.
\end{eqnarray*}

We would like to show that when all the $\theta_i$\rq s satisfies \eqref{optdim2}, $\frac{\min\limits_\theta f(\theta)}{\max\limits_\theta f(\theta)}$ is equal to the right hand side of \eqref{less}, which means \eqref{optdim2} is the optimal solution.
In order to do that, we first need to figure out what value the right hand side of \eqref{less} is.
In the following we use perturbation analysis to show that when $\delta_i = \frac{\pi}{n},$ which is equivalent to \eqref{optdim2}, the right hand side achieves the maximum value.
And then we show that the left side is equal to the right side under the condition of \eqref{optdim2}.
Therefore our proof can be completed.

For $n\geq4,$ $N_n$ and $D_n$ are treated as functions of $\Delta$.
Then we have
\begin{eqnarray*}
N_n(\delta_1+\Delta,\delta_2-\Delta,\delta_3,...,\delta_n) &=&
2\left|\sin(\delta_1+\Delta)\right| + 2\left|\sin(\delta_2-\Delta)\right| \\
&&+2\sum\limits_{j=3}^{n-1}\left|\sin(-\Delta+\sum\limits_{i=2}^j\delta_i)\right| + Const,
\end{eqnarray*}
and
\begin{eqnarray*}
\frac{\partial N_n(\delta_1+\Delta,\delta_2-\Delta,\delta_3,...,\delta_n)}{\partial \Delta}\bigg|_{\Delta=0}&=&
2\cos\delta_1\mbox{sign}(\sin\delta_1)-2\cos\delta_2\mbox{sign}(\sin\delta_2) \\
&&-2\sum\limits_{j=3}^{n-1}\cos\left(\sum\limits_{i=2}^j\delta_i\right)
\mbox{sign}\left(\sin\left(\sum\limits_{i=2}^j\delta_i\right)\right).
\end{eqnarray*}
When $\delta_i = \frac{\pi}{n}, i = 1,...,n,$ we have
\begin{eqnarray}
\frac{\partial N_n(\delta_1+\Delta,\delta_2-\Delta,\delta_3,...,\delta_n)}{\partial \Delta}\bigg|_{\Delta=0}&=&
0-2\sum\limits_{j=3}^{n-1}\cos\left(\frac{(j-1)\pi}{n}\right)\mbox{sign}\left(\sin\left(\frac{(j-1)\pi}{n}\right)\right) \nonumber \\
&=&0. \label{eq:Nn}
\end{eqnarray}

Similarly, for $D_n,$ we have
\begin{eqnarray*}
&&D_n(\delta_1+\Delta,\delta_2-\Delta,\delta_3,...,\delta_n) \\
&=& 2\left|\sin\left(\frac{\delta_1+\Delta}{2}\right)\right| + 2\left|\sin\left(\frac{\delta_2-\Delta}{2}\right)\right|
+ \left|\sin\left(\frac{\Delta}{2} + \delta_1 + \frac{\delta_2}{2}\right)\right|\\
&&+\sum\limits_{j=3}^{n-1}\left|\sin\left(-\Delta+\sum\limits_{i=2}^{j-1}\delta_i + \frac{\delta_j}{2}\right)\right|
+\sum\limits_{j=3}^{n}\left|\sin\left(-\frac{1}{2}\Delta+ \frac{\delta_1}{2}+\sum\limits_{i=2}^{j-1}\delta_i\right)\right|\\
&&+\sum\limits_{j=3}^{n-1}\left|\sin\left(-\frac{1}{2}\Delta+ \frac{\delta_2}{2}+\sum\limits_{i=3}^{j}\delta_i\right)\right|+\left|\sin\left(\frac{\delta_n}{2}+\delta_1+\Delta\right)\right| +Const,
\end{eqnarray*}
and
\begin{eqnarray*}
&&\frac{\partial D_n(\delta_1+\Delta,\delta_2-\Delta,\delta_3,...,\delta_n)}{\partial \Delta}\bigg|_{\Delta=0}\\
&=&\cos\frac{\delta_1}{2}\mbox{sign}(\sin\frac{\delta_1}{2})
-\cos\frac{\delta_2}{2}\mbox{sign}(\sin\frac{\delta_2}{2})
+\frac{1}{2}\cos\left(\frac{\delta_2}{2}+\delta_1\right)\mbox{sign}\left(\sin\left(\frac{\delta_2}{2}
+\delta_1\right)\right)\\
&&-\sum\limits_{j=3}^{n-1}\cos\left(\sum\limits_{i=2}^{j-1}\delta_i+ \frac{\delta_j}{2}\right)\mbox{sign}\left(\sin\left(\sum\limits_{i=2}^{j-1}\delta_i+ \frac{\delta_j}{2}\right)\right)\\
&&-\frac{1}{2}\sum\limits_{j=2}^{n-1}\cos\left(\frac{\delta_1}{2} + \sum\limits_{i =2}^{j}\delta_i\right)\mbox{sign}\left(\sin\left(\frac{\delta_1}{2}
 + \sum\limits_{i =2}^{j}\delta_i\right)\right)\\
&&-\frac{1}{2}\sum\limits_{j=3}^{n-1}\cos\left(\frac{\delta_2}{2} + \sum\limits_{i =3}^{j}\delta_i\right)\mbox{sign}\left(\sin\left(\frac{\delta_2}{2}
+ \sum\limits_{i =3}^{j}\delta_i\right)\right)\\
&&+\cos\left(\frac{\delta_n}{2}+\delta_1\right)\mbox{sign}\left(\sin\left(\frac{\delta_n}{2}+\delta_1\right)\right).
\end{eqnarray*}
When $\delta_i = \frac{\pi}{n}, i = 1,...,n,$ we have
\begin{eqnarray}
&&\frac{\partial D_n(\delta_1+\Delta,\delta_2-\Delta,\delta_3,...,\delta_n)}{\partial \Delta}\bigg|_{\Delta=0} \nonumber \\
&=&0+\frac{1}{2}\cos\left(\frac{3\pi}{2n}\right)\mbox{sign}\left(\sin\left(\frac{3\pi}{2n}\right)\right)
-\sum\limits_{j=3}^{n-1}\cos\left(\frac{(2j-3)\pi}{2n}\right)\mbox{sign}\left(\sin\left(\frac{(2j-3)\pi}{2n}\right)\right) \nonumber \\
&&-\frac{1}{2}\sum\limits_{j=2}^{n-1}\cos\left(\frac{(2j-1)\pi}{2n}\right)
\mbox{sign}\left(\sin\left(\frac{(2j-1)\pi}{2n}\right)\right) \nonumber \\
&&-\frac{1}{2}\sum\limits_{j=3}^{n-1}\cos\left(\frac{(2j-3)\pi}{2n}\right)
\mbox{sign}\left(\sin\left(\frac{(2j-3)\pi}{2n}\right)\right) \nonumber \\
&&+\cos\left(\frac{3\pi}{2n}\right)\mbox{sign}\left(\sin\left(\frac{3\pi}{2n}\right)\right) \nonumber \\
&=&0. \label{eq:Dn}
\end{eqnarray}

Define $g(\Delta)$ as the following
$$
g(\Delta) = \frac{N_n(\delta_1+\Delta,\delta_2-\Delta,\delta_3,...,\delta_n)}{D_n(\delta_1+\Delta,\delta_2-\Delta,\delta_3,...,\delta_n)}.
$$
Then we have
$$
\frac{\partial g(\Delta)}{\partial \Delta}\bigg |_{\Delta = 0} = \frac{N_n\rq\bigg|_{\Delta = 0}}{D_n(0)}-\frac{N_n(0)D_n\rq\bigg|_{\Delta = 0}}{D_n(0)^2},
$$
where
\begin{eqnarray*}
N_n\rq\bigg|_{\Delta = 0} &=& \frac{\partial N_n(\delta_1+\Delta,\delta_2-\Delta,\delta_3,...,\delta_n)}{\partial \Delta}\bigg|_{\Delta=0}, \\
D_n\rq\bigg|_{\Delta = 0} &=& \frac{\partial D_n(\delta_1+\Delta,\delta_2-\Delta,\delta_3,...,\delta_n)}{\partial \Delta}\bigg|_{\Delta=0};\\
N_n(0) &=& N_n(\delta_1,\delta_2,\delta_3,...,\delta_n), \\
D_n(0) &=& D_n(\delta_1,\delta_2,\delta_3,...,\delta_n).
\end{eqnarray*}
According to \eqref{eq:Nn} and \eqref{eq:Dn}, we have
$$
N_n\rq\bigg|_{\Delta = 0} = D_n\rq\bigg|_{\Delta = 0}  = 0.
$$
So we can get $\frac{\partial g(\Delta)}{\partial \Delta}\bigg |_{\Delta = 0} = 0-0=0.$

Similarly, for any two $\delta_i, \delta_j,$ simply give some perturbation to them, we can get the same result as above.
Therefore we can conclude that, for $n\geq 4$, $\left\{\delta_i = \frac{\pi}{n}, i = 1,...,n\right\}$ can maximize the function $\frac{N_n}{D_n}.$
Furthermore, we can get the maximum of  $\frac{N_n}{D_n}$ by letting each $\delta_i$ be $\frac{\pi}{n}$:
\begin{equation}\label{eq:NnDn}
\left(\frac{N_n}{D_n}\right)_{\max} =
\frac{2n\sin\frac{\pi}{n} + 2\sum\limits_{r=2}^{n-2}(n-r)\sin\frac{r\pi}{n}}{2n\sin\frac{\pi}{n}
+\sum\limits_{r=1}^{n-2}\left[2(n-r)-1\right]\sin\frac{(2r+1)\pi}{2n}}.
\end{equation}

Next, we would like to show that when $\delta_i = \frac{\pi}{n}, i = 1,...,n$, we have
$$
\frac{\min\limits_\theta f(\theta)}{\max\limits_\theta f(\theta)} = \left(\frac{N_n}{D_n}\right)_{\max}.
$$
As $f(\theta) = \sum\limits_{i =1}^n \left|\cos\left(\theta - \frac{(i-1)\pi}{n}\right)\right|$,
we know $f(\theta)=f\left(\theta - \frac{\pi}{n}\right)$.
So we only need to consider $\theta\in[0,\frac{\pi}{n}]$ to get the maximum.

Recall $f(\theta)$ is linear, so the minimum and maximum must be either $\theta=0$ or $\theta=\frac{\pi}{n}$.
By observing the periodicity of the function $f(\theta)$, we can get
\begin{eqnarray}
\min\limits_{\theta}f(\theta) &=&
 \begin{cases}
f(0) = 2\sum\limits_{r=1}^{a-1}\sin\frac{r\pi}{2a}+1&\mbox{ if $n=2a$,}\\\nonumber
f\left(\frac{\pi}{2(2a+1)}\right) = 2\sum\limits_{r=1}^{a}\sin\frac{r\pi}{2a+1}&\mbox{ if $n=2a+1$.} \\
 \end{cases}\\
\max\limits_{\theta}f(\theta) &=&
 \begin{cases}
f\left(\frac{\pi}{4a}\right) = 2\sum\limits_{r=1}^{a}\sin\frac{(2r-1)\pi}{4a}&\mbox{ if $n=2a$.} \\
f(0) = 2\sum\limits_{r=1}^{a}\sin\frac{(2r-1)\pi}{2(2a+1)}+1&\mbox{ if $n=2a+1$,}\nonumber
 \end{cases}
\end{eqnarray}

From \eqref{eq:NnDn} we can get
\begin{eqnarray}
\label{eq:fmin/fmax}
\left(\frac{N_n}{D_n}\right)_{\max} =
 \begin{cases}
\frac{2\sum\limits_{r=1}^{a-1}\sin\frac{r\pi}{2a}+1}{2\sum\limits_{r=1}^{a}\sin\frac{(2r-1)\pi}{4a}}&\mbox{ if $n=2a$.} \\
\frac{2\sum\limits_{r=1}^{a}\sin\frac{r\pi}{2a+1}}{2\sum\limits_{r=1}^{a}\sin\frac{(2r-1)\pi}{2(2a+1)}+1}&\mbox{ if $n=2a+1$,} \\
 \end{cases}
\end{eqnarray}
Therefore, we can conclude that when $\delta_i = \frac{\pi}{n}, i = 1,...,n$,
$$
\frac{\min\limits_\theta f(\theta)}{\max\limits_\theta f(\theta)} = \left(\frac{N_n}{D_n}\right)_{\max}.
$$
Recall the definition of $\delta_i$\rq s, we know that \eqref{optdim2} is the optimal solution for $n\geq4.$

For $n=3\mbox{ and }2,$ by applying the similar strategy, we can get the same result as above.
 \qed
\end{proof}

\subsection{Propositions we need in order to prove Theorem \ref{ese_O}}
Before proceeding to the proof of Theorem \ref{ese_O}, we need the following Proposition \ref{prop_2} and \ref{prop_1}:

\begin{proposition}
\label{prop_2}
\begin{eqnarray*}
&&\sum\limits_{s=1}^{n-1}\sin\frac{s}{n}\pi=\mbox{cot}\frac{\pi}{2n},\\
&&\sum\limits_{s=1}^{n-1}\cos\frac{s}{n}\pi=0,\\
&&\sum\limits_{s=1}^{n-1}s\sin\frac{s}{n}\pi=\frac{n}{2}\mbox{cot}\frac{\pi}{2n},\\
&&\sum\limits_{s=1}^{n-1}s\cos\frac{s}{n}\pi=-\frac{1}{2}\mbox{cot}^2\frac{\pi}{2n}+\frac{n-1}{2},\\
&&\sum\limits_{s=1}^{n-1}s^2\cos\frac{s}{n}\pi=-\frac{n}{2}\mbox{cot}^2\frac{\pi}{2n}+\frac{n(n-1)}{2}.
\end{eqnarray*}
\end{proposition}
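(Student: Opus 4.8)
The plan is to handle all five sums at once through complex exponentials. Set $\omega = e^{\sqrt{-1}\,\pi/n}$, so that $\omega^n = e^{\sqrt{-1}\,\pi} = -1$, $|\omega|=1$, and $\omega^{n-1} = \omega^n\omega^{-1} = -\bar\omega$. Then $\sum_{s=1}^{n-1}\cos\frac{s\pi}{n}$ and $\sum_{s=1}^{n-1}\sin\frac{s\pi}{n}$ are the real and imaginary parts of $P_0 := \sum_{s=1}^{n-1}\omega^s$; the $s$-weighted sums are the real and imaginary parts of $P_1 := \sum_{s=1}^{n-1} s\,\omega^s$; and $\sum_{s=1}^{n-1} s^2\cos\frac{s\pi}{n} = \mathrm{Re}\,P_2$ with $P_2 := \sum_{s=1}^{n-1} s^2\,\omega^s$. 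So it suffices to put $P_0, P_1, P_2$ in closed form and then read off real and imaginary parts.

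For $P_0$ I would sum the geometric series: $P_0 = \frac{\omega - \omega^n}{1-\omega} = \frac{\omega+1}{1-\omega}$. Using the half-angle factorizations $1+\omega = 2\cos\frac{\pi}{2n}\,e^{\sqrt{-1}\,\pi/(2n)}$ and $1-\omega = -2\sqrt{-1}\,\sin\frac{\pi}{2n}\,e^{\sqrt{-1}\,\pi/(2n)}$ gives $P_0 = \sqrt{-1}\,\cot\frac{\pi}{2n}$, which is exactly the first two identities. For $P_1$ and $P_2$ I would apply the Euler operator $\mathcal{D} := z\frac{d}{dz}$ to $f(z) := \sum_{s=0}^{n-1} z^s = \frac{1-z^n}{1-z}$, noting $\mathcal{D}f = \sum_{s=1}^{n-1} s\,z^s$ and $\mathcal{D}^2 f = \sum_{s=1}^{n-1} s^2\,z^s$, and then specialize to $z = \omega$. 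Differentiating $f$ and substituting $\omega^n = -1$, $\omega^{n-1} = -\bar\omega$ reduces $(\mathcal{D}f)(\omega)$ and $(\mathcal{D}^2 f)(\omega)$ to rational expressions in $\omega$ whose denominators are powers of $1-\omega$. The crucial simplification is $(1-\omega)^2 = -4\sin^2\frac{\pi}{2n}\,\omega$, hence $\tfrac{1}{(1-\omega)^2} = -\tfrac{\bar\omega}{4\sin^2(\pi/2n)}$ and similarly for higher powers; combined with $\cos\frac{\pi}{n} = 1 - 2\sin^2\frac{\pi}{2n}$, $\sin\frac{\pi}{n} = 2\sin\frac{\pi}{2n}\cos\frac{\pi}{2n}$, and $\csc^2\frac{\pi}{2n} = 1 + \cot^2\frac{\pi}{2n}$, separating real and imaginary parts yields the stated right-hand sides. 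For example this route gives $\mathrm{Im}\,P_1 = \frac{n\sin(\pi/n)}{4\sin^2(\pi/2n)} = \frac{n}{2}\cot\frac{\pi}{2n}$ and $\mathrm{Re}\,P_1 = \frac{n-1}{2} - \frac12\cot^2\frac{\pi}{2n}$, the third and fourth identities.

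I expect the $s^2$ sum $P_2 = (\mathcal{D}^2 f)(\omega)$ to be the main obstacle: carrying out $\mathcal{D}^2$ on $\frac{1-z^n}{1-z}$ produces a rational function with $(1-z)^3$ in the denominator and several $n$-dependent polynomial terms in the numerator, and the bookkeeping needed to collapse it — substituting $\omega^n = -1$, rewriting $(1-\omega)^{-k}$ via $\bar\omega$ and $\sin\frac{\pi}{2n}$, and then isolating the real part — is lengthy and error-prone, though entirely mechanical. An alternative that avoids higher derivatives is Abel summation: derive the $s$- and $s^2$-weighted identities from the unweighted ones by using the closed forms for the partial sums $\sum_{s=1}^{k}\cos\frac{s\pi}{n}$ and $\sum_{s=1}^{k}\sin\frac{s\pi}{n}$ (themselves of Dirichlet-kernel type), which trades the second derivative for a telescoping argument of comparable length; a third, fully elementary route is induction on $n$, though its inductive step needs the same half-angle identities and is not obviously shorter. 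In any case, once $P_0$, $P_1$, $P_2$ are known, the five identities are immediate from taking real and imaginary parts.
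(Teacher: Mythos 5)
Your proposal is correct, but it follows a genuinely different route from the paper. The paper works entirely with real trigonometry: it multiplies each sum by $\sin\frac{\pi}{2n}$, applies product-to-sum formulas such as $\sin\frac{s\pi}{n}\sin\frac{\pi}{2n}=\frac{1}{2}\left(\cos\frac{(2s-1)\pi}{2n}-\cos\frac{(2s+1)\pi}{2n}\right)$ to telescope, and then establishes the five identities in sequence, expressing each weighted sum through the previously proved lower-order ones (this is essentially the ``Abel summation'' alternative you mention at the end). You instead package the five sums into $P_0,P_1,P_2$ and exploit $\omega^n=-1$ together with the geometric series and the Euler operator. I checked your intermediate claims: $1+\omega=2\cos\frac{\pi}{2n}e^{\sqrt{-1}\pi/(2n)}$, $1-\omega=-2\sqrt{-1}\sin\frac{\pi}{2n}e^{\sqrt{-1}\pi/(2n)}$, hence $P_0=\sqrt{-1}\cot\frac{\pi}{2n}$; and $(1-\omega)^2=-4\sin^2\frac{\pi}{2n}\,\omega$ gives $P_1=\frac{n}{1-\omega}+\frac{2\omega}{(1-\omega)^2}=\left(\frac{n-1}{2}-\frac{1}{2}\cot^2\frac{\pi}{2n}\right)+\sqrt{-1}\,\frac{n}{2}\cot\frac{\pi}{2n}$, which reproduces identities one through four exactly. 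What your approach buys is economy: the first four identities fall out of two short complex computations, and the method extends uniformly to higher moments. What the paper's approach buys is that it never leaves the reals and the fifth identity reduces cleanly to the first four via Proposition~\ref{prop_2}'s earlier lines, whereas your $P_2=(\mathcal{D}^2f)(\omega)$ requires clearing a $(1-\omega)^3$ denominator. The one caveat is that you only sketch the $P_2$ computation, which is precisely the identity needing the most work in either approach; the tools you list (rewriting $(1-\omega)^{-k}$ via $\bar\omega$ and $\sin\frac{\pi}{2n}$, then taking the real part) do suffice to complete it, but for a finished proof that step should be written out, or replaced by the reduction to $P_0$ and $P_1$ that you already identify as an alternative.
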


\begin{proof}
As the following holds true
$$
\sin\frac{s\pi}{N}\sin\frac{\pi}{2n}
=\frac{1}{2}\left(\cos\frac{(2s-1)\pi}{2n}-\cos\frac{(2s+1)\pi}{2n}\right),
$$
we have
\begin{eqnarray*}
&&\left(\sum\limits_{s=1}^{n-1}\sin\frac{s}{n}\pi\right)\cdot\sin\frac{\pi}{2n} \\
&=&\frac{1}{2}\sum\limits_{s=1}^{n-1}\left(\cos\frac{(2s-1)\pi}{2n}-\cos\frac{(2s+1)\pi}{2n}\right)
=\frac{1}{2}\left(\cos\frac{\pi}{2n}-\cos\frac{(2n-1)\pi}{2n}\right)=\cos\frac{\pi}{2n}.
\end{eqnarray*}
So by dividing $\sin\frac{\pi}{2n}$ for both sides, we can get
\begin{equation}
\label{sum_sin}
\sum\limits_{s=1}^{n-1}\sin\frac{s}{n}\pi=\mbox{cot}\frac{\pi}{2n}.
\end{equation}

As we also have
$$
\cos\frac{s\pi}{N}\sin\frac{\pi}{2n}
=\frac{1}{2}\left(\sin\frac{(2s+1)\pi}{2n}-\sin\frac{(2s-1)\pi}{2n}\right).
$$
Therefore, we can get
\begin{eqnarray*}
&&\left(\sum\limits_{s=1}^{n-1}\cos\frac{s}{n}\pi\right)\cdot\sin\frac{\pi}{2n} \\
&=&\frac{1}{2}\sum\limits_{s=1}^{n-1}\left(\sin\frac{(2s+1)\pi}{2n}-\sin\frac{(2s-1)\pi}{2n}\right)
=\frac{1}{2}\left(\sin\frac{(2n-1)\pi}{2n}-\sin\frac{\pi}{2n}\right)=0,
\end{eqnarray*}
which implies
\begin{equation}
\label{sum_cos}
\sum\limits_{s=1}^{n-1}\cos\frac{s}{n}\pi=0.
\end{equation}

As we also have
$$
\sin\frac{s}{n}\pi\cdot\sin\frac{\pi}{2n} = \cos\frac{(2s-1)\pi}{2n}-\cos\frac{(2s+1)\pi}{2n},
$$
the following can be derived:
\begin{equation}\label{eq:sumsinsin}
\left(\sum\limits_{s=1}^{n-1}s\sin\frac{s}{n}\pi\right)\cdot\sin\frac{\pi}{2n}
= \frac{1}{2}\sum\limits_{s=1}^{n-1}s\cdot\left(\cos\frac{(2s-1)\pi}{2n}-\cos\frac{(2s+1)\pi}{2n}\right).
\end{equation}
Since we have
\begin{eqnarray*}
&&\sum\limits_{s=1}^{n-1}s\cdot\left(\cos\frac{(2s-1)\pi}{2n}-\cos\frac{(2s+1)\pi}{2n}\right)\\
&=& \sum\limits_{s=1}^{n-1}\cos\frac{(2s-1)\pi}{2n}-(n-1)\cos\frac{(2n-1)\pi}{2n} \\
&=& \sum\limits_{s=1}^{n-1}\left(\cos\frac{s}{n}\pi\cos\frac{\pi}{2n}
+\sin\frac{s}{n}\pi\sin\frac{\pi}{2n}\right)+(n-1)\cos\frac{\pi}{2n},
\end{eqnarray*}
by plugging the above as well as \eqref{sum_sin} and \eqref{sum_cos} into \eqref{eq:sumsinsin}, we can get
\begin{eqnarray}
&&\left(\sum\limits_{s=1}^{n-1}s\sin\frac{s}{n}\pi\right)\cdot\sin\frac{\pi}{2n} \nonumber \\
&=&\frac{1}{2}\left(\sum\limits_{s=1}^{n-1}\left(\cos\frac{s}{n}\pi\cos\frac{\pi}{2n}
+\sin\frac{s}{n}\pi\sin\frac{\pi}{2n}\right)+(n-1)\cos\frac{\pi}{2n}\right) \nonumber \\
&=&\frac{1}{2}\left(0+\cos\frac{\pi}{2n}\right)+\frac{n-1}{2}\cos\frac{\pi}{2n}
=\frac{n}{2}\cos\frac{\pi}{2n}. \label{sum_times_sin}
\end{eqnarray}
Similarly, since we have
$$
\cos\frac{s}{n}\pi\cdot\sin\frac{\pi}{2n} = \sin\frac{(2s+1)\pi}{2n}-\sin\frac{(2s-1)\pi}{2n},
$$
by using the similar strategy, we can get
\begin{equation}
\left(\sum\limits_{s=1}^{n-1}s\cos\frac{s}{n}\pi\right)\cdot\sin\frac{\pi}{2n} 
=-\frac{1}{2}\cos\frac{\pi}{2n}\mbox{cot}\frac{\pi}{2n}+\frac{n-1}{2}\sin\frac{\pi}{2n}. \label{sum_times_sin_2}
\end{equation}

Therefore, dividing both the equations \eqref{sum_times_sin} and \eqref{sum_times_sin_2}, we can get
\begin{eqnarray}
&&\sum\limits_{s=1}^{n-1}s\sin\frac{s}{n}\pi=\frac{n}{2}\mbox{cot}\frac{\pi}{2n}, \label{eq:sum_ssin}\\
&&\sum\limits_{s=1}^{n-1}s\cos\frac{s}{n}\pi=-\frac{1}{2}\mbox{cot}^2\frac{\pi}{2n}+\frac{n-1}{2}. \label{eq:sum_scos}
\end{eqnarray}

Since the following holds true,
$$
\left(\sum\limits_{s=1}^{n-1}s^2\cos\frac{s}{n}\pi\right)\cdot\sin\frac{\pi}{2n}
=\frac{1}{2}\left\{-\sum\limits_{s=1}^{n-1}(2s-1)\sin\frac{(2s-1)\pi}{2n}+(n-1)^2\sin\frac{2n-1}{2n}\pi\right\},
$$
by simplifying the above equation we can get
\begin{eqnarray*}
&&\left(\sum\limits_{s=1}^{n-1}s^2\cos\frac{s}{n}\pi\right)\cdot\sin\frac{\pi}{2n} \\
&=&-\sum\limits_{s=1}^{n-1}s\left(\sin\frac{s}{n}\pi\cos\frac{\pi}{2n}
-\cos\frac{s}{n}\pi\sin\frac{\pi}{2n}\right)
+\frac{1}{2}\sum\limits_{s=1}^{n-1}\left(\sin\frac{s}{n}\pi\cos\frac{\pi}{2n}
-\cos\frac{s}{n}\pi\sin\frac{\pi}{2n}\right) \\
&&+\frac{(n-1)^2}{2}\sin\frac{\pi}{2n} \\
&=&-\left(\sum\limits_{s=1}^{n-1}s\sin\frac{s}{n}\pi\right)\cos\frac{\pi}{2n}
+ \left(\sum\limits_{s=1}^{n-1}s\cos\frac{s}{n}\pi\right)\sin\frac{\pi}{2n}
+\frac{1}{2}\left(\sum\limits_{s=1}^{n-1}\sin\frac{s}{n}\pi\right)\cos\frac{\pi}{2n} \\
&&-\frac{1}{2}\left(\sum\limits_{s=1}^{n-1}\cos\frac{s}{n}\pi\right)\sin\frac{\pi}{2n}
+\frac{(n-1)^2}{2}\sin\frac{\pi}{2n}. \\
\end{eqnarray*}
Plugging \eqref{sum_sin}, \eqref{sum_cos}, \eqref{eq:sum_ssin}, and \eqref{eq:sum_scos} into the above, we can get
$$
\left(\sum\limits_{s=1}^{n-1}s^2\cos\frac{s}{n}\pi\right)\cdot\sin\frac{\pi}{2n}
=-\frac{1}{2}\mbox{cot}^2\frac{\pi}{2n}\sin\frac{\pi}{2n}
+\frac{n(n-1)}{2}\sin\frac{\pi}{2n}.
$$
Therefore, dividing $\sin\frac{\pi}{2n}$ on each side, we get
$$
\sum\limits_{s=1}^{n-1}s^2\cos\frac{s}{n}\pi
=-\frac{n}{2}\mbox{cot}^2\frac{\pi}{2n}+\frac{n(n-1)}{2}.
$$ \qed
\end{proof}


\begin{proposition}
\label{prop_1}
$$
2\sum\limits_{s=1}^{n-1}(n-s)f(s)
=\frac{n}{\pi}\mbox{cot}\frac{\pi}{2n}+\frac{1}{2}\mbox{cot}^2\frac{\pi}{2n}-\frac{n}{2}+\frac{1}{2}.
$$
\end{proposition}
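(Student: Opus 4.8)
The plan is to treat this identity as a direct computation built on Proposition \ref{prop_2}: substitute the closed form of $f(s)$ into the left-hand side, expand $2(n-s)f(s)$ as a linear combination --- with polynomial-in-$s$ coefficients --- of the trigonometric terms that Proposition \ref{prop_2} evaluates, sum over $s$, and simplify.

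First I would recall (from the derivation immediately preceding this proposition) that
$$
f(s)=\frac1\pi\Bigl[\Bigl(\tfrac\pi2-\tfrac{s\pi}{n}\Bigr)\cos\tfrac{s\pi}{n}+\sin\tfrac{s\pi}{n}\Bigr],
$$
being careful to check that this single expression is valid for the whole range $s=1,\dots,n-1$, i.e.\ for $\tfrac{s\pi}{n}$ on both sides of $\tfrac\pi2$ (the two regimes of the defining absolute value yield the same closed form, by the symmetry $\beta\mapsto\pi-\beta$). Using $(n-s)\bigl(\tfrac\pi2-\tfrac{s\pi}{n}\bigr)=\tfrac{\pi n}{2}-\tfrac{3\pi s}{2}+\tfrac{\pi s^2}{n}$, this gives
$$
2(n-s)f(s)=\Bigl(n-3s+\tfrac{2s^2}{n}\Bigr)\cos\tfrac{s\pi}{n}+\tfrac2\pi(n-s)\sin\tfrac{s\pi}{n},
$$
so that $2\sum_{s=1}^{n-1}(n-s)f(s)$ is a fixed linear combination of the five sums $\sum\cos\tfrac{s\pi}{n}$, $\sum s\cos\tfrac{s\pi}{n}$, $\sum s^2\cos\tfrac{s\pi}{n}$, $\sum\sin\tfrac{s\pi}{n}$, $\sum s\sin\tfrac{s\pi}{n}$ (each over $s=1,\dots,n-1$).

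Next I would substitute the evaluations from Proposition \ref{prop_2}. The term $n\sum\cos\tfrac{s\pi}{n}$ drops out because $\sum_{s=1}^{n-1}\cos\tfrac{s\pi}{n}=0$; the $s$- and $s^2$-weighted cosine sums produce the $\cot^2\tfrac{\pi}{2n}$ part, whose coefficient collapses to $-3\cdot(-\tfrac12)+\tfrac2n\cdot(-\tfrac n2)=\tfrac12$, together with the rational remainder $-3\cdot\tfrac{n-1}{2}+\tfrac2n\cdot\tfrac{n(n-1)}{2}=-\tfrac{n-1}{2}$; and the sine part gives $\tfrac2\pi\bigl(n\cot\tfrac{\pi}{2n}-\tfrac n2\cot\tfrac{\pi}{2n}\bigr)=\tfrac n\pi\cot\tfrac{\pi}{2n}$. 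Collecting, the left-hand side equals $\tfrac n\pi\cot\tfrac{\pi}{2n}+\tfrac12\cot^2\tfrac{\pi}{2n}-\tfrac{n-1}{2}$, and rewriting $-\tfrac{n-1}{2}=-\tfrac n2+\tfrac12$ yields exactly the claimed right-hand side.

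Since this is a verification there is no deep obstacle. The two places that need care are (a) pinning down the closed form of $f(s)$ correctly across both regimes $\tfrac{s\pi}{n}\le\tfrac\pi2$ and $\tfrac{s\pi}{n}\ge\tfrac\pi2$, and (b) bookkeeping the polynomial coefficients so that the separate $\cot^2\tfrac{\pi}{2n}$ contributions combine to exactly $\tfrac12$ and the constant tails collapse to $-\tfrac n2+\tfrac12$. The clean vanishing of the leading $O(n)\cdot\sum\cos\tfrac{s\pi}{n}$ term --- forced by the second identity of Proposition \ref{prop_2} --- is what leaves the final expression this compact.
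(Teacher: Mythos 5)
Your proposal is correct and follows essentially the same route as the paper's proof: substitute the closed form of $f(s)$ from \eqref{eq:fs}, expand $2(n-s)f(s)$ into the five weighted trigonometric sums, and evaluate them via Proposition \ref{prop_2}; your coefficient bookkeeping ($\tfrac12$ for $\cot^2\tfrac{\pi}{2n}$, $-\tfrac{n-1}{2}$ for the rational tail, $\tfrac{n}{\pi}\cot\tfrac{\pi}{2n}$ from the sine sums) all checks out. You in fact supply more arithmetic detail than the paper, which simply states the decomposition and then "Applying Proposition \ref{prop_2}" gives the result.
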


\begin{proof}
According to the definition of function $f(s)$ in \eqref{eq:fs}, we have
\begin{eqnarray*}
&&2\sum\limits_{s=1}^{n-1}(n-s)f(s) \\
&=&2\sum\limits_{s=1}^{n-1}(n-s)\left(\frac{1}{\pi}\sin\frac{s}{n}\pi+\left(\frac{1}{2}-\frac{s}{n}\right)\cos\frac{s}{n}\pi\right) \\
&=&\frac{2n}{\pi}\sum\limits_{s=1}^{n-1}\sin\frac{s}{n}\pi-\frac{2}{\pi}\sum\limits_{s=1}^{n-1}s\sin\frac{s}{n}\pi
+n\sum\limits_{s=1}^{n-1}\cos\frac{s}{n}\pi-3\sum\limits_{s=1}^{n-1}s\cos\frac{s}{n}\pi
+\frac{2}{n}\sum\limits_{s=1}^{n-1}s^2\cos\frac{s}{n}\pi.
\end{eqnarray*}
Applying Proposition \ref{prop_2},we have
$$
2\sum\limits_{s=1}^{n-1}(n-s)f(s)
=\frac{n}{\pi}\mbox{cot}\frac{\pi}{2n}+\frac{1}{2}\mbox{cot}^2\frac{\pi}{2n}-\frac{n}{2}+\frac{1}{2}.  \qquad \qquad \qed
$$
\end{proof}


\subsection{Proof of Theorem \ref{ese_O}}
\label{sec:app:proof-3}
\begin{proof}
Recall that $u_i$ can be rewritten as $$u_i = e^{\sqrt{-1} \frac{i\pi}{n}}, i = 0,1,\dots,n-1.$$
And we have
\begin{eqnarray}
&&\mathop{\mathbb{E}}\limits_{v\sim \mbox{Unif}(S^1)}\left\{\left|C_n\sum\limits_{i=1}^n\left|u_i^Tv\right|-1\right|^2\right\} \nonumber \\
&=&C_n^2\mathop{\mathbb{E}}\limits_{v\sim \mbox{Unif}(S^1)}\left\{\left(\sum\limits_{i=1}^n\left|u_i^Tv\right|\right)^2\right\}-2C_n\mathop{\mathbb{E}}\limits_{v\sim \mbox{Unif}(S^1)}\left\{\sum\limits_{i=1}^n\left|u_i^Tv\right|\right\}+1 \nonumber \\
&=&C_n^2\sum\limits_{i=1}^n\mathop{\mathbb{E}}\limits_{v\sim \mbox{Unif}(S^1)}\left(\left|u_i^Tv\right|^2\right)+2C_n^2\sum\limits_{1\leq i<j\leq N}\mathop{\mathbb{E}}\limits_{v\sim \mbox{Unif}(S^1)}\left(\left|u_i^Tv\right|\left|u_j^Tv\right|\right) \nonumber \\
&&-2C_n\sum\limits_{i=1}^n\mathop{\mathbb{E}}\limits_{v\sim \mbox{Unif}(S^1)}\left\{\left|u_i^Tv\right|\right\}+1 \label{eq_0}.
\end{eqnarray}

So we will find out the expected squared error, if for all $i,j = 1, \dots, n$, we can get the values of
$$
\mathop{\mathbb{E}}\limits_{v\sim \mbox{Unif}(S^1)}\left(\left|u_i^Tv\right|^2\right), ~~~
\mathop{\mathbb{E}}\limits_{v\sim \mbox{Unif}(S^1)}\left(\left|u_i^Tv\right|\left|u_j^Tv\right|\right), ~~~
\mathop{\mathbb{E}}\limits_{v\sim \mbox{Unif}(S^1)}\left\{\left|u_i^Tv\right|\right\}.
$$

In order to calculate $\mathop{\mathbb{E}}\limits_{v\sim \mbox{Unif}(S^1)}\left(\left|u_i^Tv\right|^2\right)$, we let $u_i = (1,0)^\prime$ and $v = (\cos\theta, \sin\theta)^\prime$ without loss of generality.
Then,
$$
\mathop{\mathbb{E}}\limits_{v\sim \mbox{Unif}(S^1)}\left(\left|u_i^Tv\right|^2\right)
= \mathop{\mathbb{E}}\limits_{\theta\sim \mbox{Unif}(0,2\pi)}\cos^2\theta
= \frac{1}{2}+\frac{1}{2}\mathop{\mathbb{E}}\limits_{\theta\sim \mbox{Unif}(0,2\pi)}\cos2\theta
= \frac{1}{2}.
$$

Without loss of generality, assume $\langle u_i, u_j\rangle= \frac{s}{n}\pi, \mbox{ for all }1\leq i,j\leq n, i\not=j, $ which means we can assume
$$
u_i = (1,0)^\prime, u_j = (\cos\frac{s}{n}\pi, \sin\frac{s}{n}\pi)^\prime, s = 1,2,\dots, n-1.
$$
Therefore, we have
\begin{eqnarray*}
\left|u_i^Tv\right|\cdot\left|u_j^Tv\right| &=& \left|\cos\theta\right|\left|\cos\theta\cos\frac{s}{n}\pi
+\sin\theta\sin\frac{s}{n}\pi\right| \\
&=& \left|\cos^2\theta\cos\frac{s}{n}\pi
+ \cos\theta\sin\theta\sin\frac{s}{n}\pi\right|.
\end{eqnarray*}
As the following equations hold,
$$
\cos^2\theta=\frac{1+\cos2\theta}{2} \mbox{ and } \cos\theta\sin\theta = \frac{\sin2\theta}{2},
$$
quantity $\left|u_i^Tv\right|\cdot\left|u_j^Tv\right|$ can be further written as
\begin{eqnarray*}
\left|u_i^Tv\right|\cdot\left|u_j^Tv\right|&=&\frac{1}{2}\left|\cos2\theta\cos\frac{s}{n}\pi
+\sin2\theta\sin\frac{s}{n}\pi+\cos\frac{s}{n}\pi\right| \\
&=& \frac{1}{2}\left|\cos\left(2\theta-\frac{s}{n}\pi\right)
+\cos\frac{s}{n}\pi\right|.
\end{eqnarray*}
So $\mathop{\mathbb{E}}\limits_{v\sim \mbox{Unif}(S^1)}\left(\left|u_i^Tv\right|\cdot\left|u_j^Tv\right|\right)$ can be rewritten as follows:
\begin{eqnarray*}
&&\mathop{\mathbb{E}}\limits_{v\sim \mbox{Unif}(S^1)}\left(\left|u_i^Tv\right|\cdot\left|u_j^Tv\right|\right) \\
&=&\frac{1}{2}\mathop{\mathbb{E}}\limits_{\theta\sim \mbox{Unif}(0,2\pi)}\left\{\left|\cos\left(2\theta-\frac{s}{n}\pi\right)
+\cos\frac{s}{n}\pi\right|\right\}\\
&=&\frac{1}{2}\times\frac{1}{2\pi}\left(\int\limits_0^\pi
+\int\limits_\pi^{2\pi}\right)\left|\cos\left(2\theta-\frac{s}{n}\pi\right)+\cos\frac{s}{n}\pi\right|\mbox{d}\theta.\\
\end{eqnarray*}
As we have
\begin{eqnarray*}
&&\int\limits_\pi^{2\pi}\left|\cos\left(2\theta-\frac{s}{n}\pi\right)+\cos\frac{s}{n}\pi\right|\mbox{d}\theta \\
&=& \int\limits_0^\pi\left|\cos\left(2\theta-\frac{s}{n}\pi\right)+\cos\frac{s}{n}\pi\right|\mbox{d}\theta
= \int\limits_0^\pi\left|\cos\left(2\theta\right)+\cos\frac{s}{n}\pi\right|d\theta \\
&=&\int_{-\frac{\pi}{2}+\frac{s}{2n}\pi}^{\frac{\pi}{2}+\frac{s}{2n}\pi}\left|\cos\left(2\theta\right)+\cos\frac{s}{n}\pi\right|d\theta,
\end{eqnarray*}
we can get
\begin{equation}\label{eq:Euv}
\mathop{\mathbb{E}}\limits_{v\sim \mbox{Unif}(S^1)}\left(\left|u_i^Tv\right|\cdot\left|u_j^Tv\right|\right)
=\frac{1}{2\pi}\int_{-\frac{\pi}{2}+\frac{s}{2n}\pi}^{\frac{\pi}{2}-\frac{s}{2n}\pi}\left|\cos\left(2\theta\right)+\cos\frac{s}{n}\pi\right|d\theta.
\end{equation}
By breaking the integral interval $(-\frac{\pi}{2}+\frac{s}{2n}\pi, \frac{\pi}{2}+\frac{s}{2n}\pi)$ into two subintervals,
$(-\frac{\pi}{2}+\frac{s}{2n}\pi, \frac{\pi}{2}-\frac{s}{2n}\pi)$ and
$(\frac{\pi}{2}-\frac{s}{2n}\pi, \frac{\pi}{2}+\frac{s}{2n}\pi)$, we have
\begin{eqnarray*}
\left|\cos2\theta+\cos\frac{s}{n}\pi\right| =
\begin{cases}
\cos2\theta+\cos\frac{s}{n}\pi, & \theta\in(-\frac{\pi}{2}+\frac{s}{2n}\pi, \frac{\pi}{2}-\frac{s}{2n}\pi),\\
-\left(\cos2\theta+\cos\frac{s}{n}\pi\right), &\theta\in(\frac{\pi}{2}-\frac{s}{2n}\pi, \frac{\pi}{2}+\frac{s}{2n}\pi).
\end{cases}
\end{eqnarray*}
Combining \eqref{eq:Euv}, we get
\begin{eqnarray*}
&&\mathop{\mathbb{E}}\limits_{v\sim \mbox{Unif}(S^1)}\left(\left|u_i^Tv\right|\cdot\left|u_j^Tv\right|\right)\\
&=&\frac{1}{2\pi}\left(\int_{-\frac{\pi}{2}+\frac{s}{2n}\pi}^{\frac{\pi}{2}-\frac{s}{2n}\pi}
+\int_{\frac{\pi}{2}-\frac{s}{2n}\pi}^{\frac{\pi}{2}+\frac{s}{2n}\pi}\right)\left|\cos2\theta+\cos\frac{s}{n}\pi\right|d\theta\\
&=&\frac{1}{2\pi}\left\{\int_{-\frac{\pi}{2}+\frac{s}{2n}\pi}^{\frac{\pi}{2}-\frac{s}{2n}\pi}\left(\cos2\theta
+\cos\frac{s}{n}\pi\right)d\theta-\int_{\frac{\pi}{2}-\frac{s}{2n}\pi}^{\frac{\pi}{2}+\frac{s}{2n}\pi}\left(\cos2\theta
+\cos\frac{s}{n}\pi\right)d\theta\right\} \\
&=&\frac{1}{2\pi}\left\{2\sin\frac{s}{n}\pi+\left(\pi-\frac{2s}{N}\pi\right)\cos\frac{s}{n}\pi\right\} \\
&=&\frac{1}{\pi}\sin\frac{s}{n}\pi+\left(\frac{1}{2}-\frac{s}{n}\right)\cos\frac{s}{n}\pi.
\end{eqnarray*}
If we define
\begin{equation}\label{eq:fs}
f(s)=\frac{1}{\pi}\sin\frac{s}{n}\pi+\left(\frac{1}{2}-\frac{s}{n}\right)\cos\frac{s}{n}\pi, s = 0,1,2,\cdots,n-1.
\end{equation}
Then we will get
\begin{eqnarray}
\mathop{\mathbb{E}}\limits_{v\sim \mbox{Unif}(S^1)}\left(\left|u_i^Tv\right|^2\right) &=& f(0), \nonumber \\
\mathop{\mathbb{E}}\limits_{v\sim \mbox{Unif}(S^1)}\left(\left|u_i^Tv\right|\cdot\left|u_j^Tv\right|\right) &=&f(s),
\mbox{ where }\langle u_i, u_j\rangle=\frac{s}{n}\pi, s = 1,2,\cdots,n-1. \label{eq_1}
\end{eqnarray}

Similarly, without loss of generality, if we assume
$u_i = (1,0)^\prime, v = (\cos\theta,\sin\theta)^\prime,$ the following holds,
\begin{equation}
\label{eq_2}
\mathop{\mathbb{E}}\limits_{v\sim \mbox{Unif}(S^1)}\left\{\left|u_i^Tv\right|\right\}
= \mathop{\mathbb{E}}\limits_{\theta\sim \mbox{Unif}(-\pi,\pi)}\left|\cos\theta\right|
= 2\int\limits_{-\frac{\pi}{2}}^{\frac{\pi}{2}}\frac{1}{2\pi}\cos\theta d\theta = \frac{2}{\pi}.
\end{equation}

Recall that we have
$$
C_n = \frac{2}{\mbox{V}_{\min}+\mbox{V}_{\max}},
\mbox{ where } \mbox{V}_{\min}=\min\limits_{v:\|v\|=1}\sum\limits_{i=1}^n\left|u_i^Tv\right|,
 \mbox{V}_{\max}=\max\limits_{v:\|v\|=1}\sum\limits_{i=1}^n\left|u_i^Tv\right|.
$$
From \eqref{eq:fmin/fmax} we can easily verify that
$$
\mbox{V}_{\min}+\mbox{V}_{\max}=2\sum\limits_{k=1}^{n-1}\sin\frac{k\pi}{2n} +1.
$$
Therefore, $C_n$ can be derived:
\begin{equation}\label{eq:Cn1}
C_n = \frac{2}{2\sum\limits_{k=1}^{n-1}\sin\frac{k\pi}{2n} +1}.
\end{equation}

As we have
$$
\sin\frac{k\pi}{2n}\cdot\sin\frac{\pi}{4n}
=\frac{1}{2}\left(\cos\frac{(2k-1)\pi}{4n}-\cos\frac{(2k+1)\pi}{4n}\right),
$$
we can get
$$
\sin\frac{\pi}{4n}\left(\sum\limits_{k=1}^{n-1}\sin\frac{k\pi}{2n}\right)
=\frac{1}{2}\sum\limits_{k=1}^{n-1}\left(\cos\frac{(2k-1)\pi}{4n}-\cos\frac{(2k+1)\pi}{4n}\right)
=\frac{1}{2}\left(\cos\frac{\pi}{4n}-\sin\frac{\pi}{4n}\right),
$$
which leads to
\begin{equation}
\label{eq:triangle1}
\sum\limits_{k=1}^{n-1}\sin\frac{k\pi}{2n}
= \frac{\frac{1}{2}\left(\cos\frac{\pi}{4n}-\sin\frac{\pi}{4n}\right)}{\sin\frac{\pi}{4n}}
=\frac{1}{2}\mbox{cot}\frac{\pi}{4n}-\frac{1}{2}.
\end{equation}
Therefore, by plugging \eqref{eq:triangle1} into \eqref{eq:Cn1}, we have
$$
C_n=\frac{2}{\mbox{cot}\frac{\pi}{4n}} = 2\mbox{tan}\frac{\pi}{4n}.
$$

If we plug in \eqref{eq_0} with \eqref{eq_1} and \eqref{eq_2}, we can get

\begin{eqnarray}
&&\mathop{\mathbb{E}}\limits_{v\sim \mbox{Unif}(S^1)}\left\{\left|C_n\sum\limits_{i=1}^n\left|u_i^Tv\right|-1\right|^2\right\} \nonumber \\
&=&C_n^2\sum\limits_{i=1}^n\mathop{\mathbb{E}}\limits_{v\sim \mbox{Unif}(S^1)}\left(\left|u_i^Tv\right|^2\right)+2C_n^2\sum\limits_{1\leq i<j\leq N}\mathop{\mathbb{E}}\limits_{v\sim \mbox{Unif}(S^1)}\left(\left|u_i^Tv\right|\left|u_j^Tv\right|\right) \nonumber \\
&&-2C_n\sum\limits_{i=1}^n\mathop{\mathbb{E}}\limits_{v\sim \mbox{Unif}(S^1)}\left\{\left|u_i^Tv\right|\right\}+1 \nonumber \\
&=&4\mbox{tan}^2\frac{\pi}{4n}\left(\frac{n}{2}+2\sum\limits_{s=1}^{n-1}(n-s)f(s)\right)-\frac{8N}{\pi}\mbox{tan}\frac{\pi}{4n} +1. \label{eq_3}
\end{eqnarray}
In order to calculate the part $\sum\limits_{s=1}^{n-1}(n-s)f(s)$ in \eqref{eq_3}, we need the Proposition \ref{prop_1}.
Applying Proposition \ref{prop_1} on \eqref{eq_3}, we get
\begin{eqnarray}
&&\mathop{\mathbb{E}}\limits_{v\sim \mbox{Unif}(S^1)}\left\{\left|C_n\sum\limits_{i=1}^n\left|u_i^Tv\right|-1\right|^2\right\} \nonumber \\
&=&4\mbox{tan}^2\frac{\pi}{4n}\left(\frac{n}{2}+\frac{n}{\pi}\mbox{cot}\frac{\pi}{2n}+\frac{1}{2}\mbox{cot}^2\frac{\pi}{2n}-\frac{n}{2}
+\frac{1}{2}\right)-\frac{8n}{\pi}\mbox{tan}\frac{\pi}{4n} +1  \nonumber \\
&=&2\mbox{tan}^2\frac{\pi}{4n}\mbox{cot}^2\frac{\pi}{2n}+\frac{4n}{\pi}\mbox{tan}^2\frac{\pi}{4n}\mbox{cot}\frac{\pi}{2n}
+2\mbox{tan}^2\frac{\pi}{4n}-\frac{8n}{\pi}\mbox{tan}\frac{\pi}{4n} +1. \label{eq_4}
\end{eqnarray}


As $\mbox{tan}x\rightarrow x,\mbox{ as } x\rightarrow 0$, we can get
\begin{eqnarray*}
\mathop{\mathbb{E}}\limits_{v\sim \mbox{Unif}(S^1)}\left\{\left|C_n\sum\limits_{i=1}^n\left|u_i^Tv\right|-1\right|^2\right\}
&\longrightarrow & 2\frac{\pi^2}{16n^2}\frac{4n^2}{\pi^2}+\frac{4n}{\pi}\frac{\pi^2}{16n^2}\frac{2n}{\pi}+2\frac{\pi^2}{16n^2}
-\frac{8n}{\pi}\frac{\pi}{4n}+1 \\
&=&\frac{\pi^2}{8n^2}. \qquad \qquad \qed
\end{eqnarray*}
\end{proof}



\subsection{Proof of Theorem \ref{ese_M}}
\label{sec:app:proof-4}
\begin{proof}
Monte Carlo method uses random directions to approximate the norm, which means
$$
u_i\sim \mbox{Unif}(S^1), i.i.d.
$$
We also know that
\begin{eqnarray}
&&\mathop{\mathbb{E}}\limits_{u_i, v\sim \mbox{Unif}(S^1)}\left\{\left|C_n\sum\limits_{i=1}^n\left|u_i^Tv\right|-1\right|^2\right\} \nonumber \\
&=&C_n^2\mathop{\mathbb{E}}\limits_{u_i, v\sim \mbox{Unif}(S^1)}\left\{\left(\sum\limits_{i=1}^n\left|u_i^Tv\right|\right)^2\right\}-2C_n\mathop{\mathbb{E}}\limits_{u_i,v\sim \mbox{Unif}(S^1)}\left\{\sum\limits_{i=1}^n\left|u_i^Tv\right|\right\}+1 \nonumber \\
&=&C_n^2\sum\limits_{i=1}^n\mathop{\mathbb{E}}\limits_{u_i, v\sim \mbox{Unif}(S^1)}\left(\left|u_i^Tv\right|^2\right)+2C_n^2\sum\limits_{1\leq i<j\leq N}\mathop{\mathbb{E}}\limits_{u_i, u_j, v\sim \mbox{Unif}(S^1)}\left(\left|u_i^Tv\right|\left|u_j^Tv\right|\right) \nonumber \\
&&-2C_n\sum\limits_{i=1}^n\mathop{\mathbb{E}}\limits_{u_i,v\sim \mbox{Unif}(S^1)}\left\{\left|u_i^Tv\right|\right\}+1, \label{eq_M}
\end{eqnarray}
where
$C_n$ satisfies
$$
C_n\cdot\int\limits_{u_i\in S^1}\sum\limits_{i=1}^n\left|u_i^Tv\right|du_i=1,
$$
which implies
$$
C_n = \frac{\pi}{2n}.
$$

We can find out the expected squared error if we can get the values of
\begin{eqnarray*}
&&\mathop{\mathbb{E}}\limits_{u_i, v\sim \mbox{Unif}(S^1)}\left(\left|u_i^Tv\right|^2\right),\\
&&\mathop{\mathbb{E}}\limits_{u_i, u_j, v\sim \mbox{Unif}(S^1)}\left(\left|u_i^Tv\right|\left|u_j^Tv\right|\right), \\
&&\mathop{\mathbb{E}}\limits_{u_i, v\sim \mbox{Unif}(S^1)}\left\{\left|u_i^Tv\right|\right\}, ~~~
\mbox{for all }i,j = 1, \cdots, n.
\end{eqnarray*}

Let $u_i=(\cos\phi, \sin\phi)^\prime, v=(\cos\theta,\sin\theta)^\prime,$ where $\phi\sim \mbox{Unif}(0,2\pi), \theta\sim \mbox{Unif}(0,2\pi).$
Then the above three can be computed as follows:
\begin{eqnarray*}
&&\mathop{\mathbb{E}}\limits_{u_i, v\sim \mbox{Unif}(S^1)}\left(\left|u_i^Tv\right|^2\right) \\
&=&\mathop{\mathbb{E}}\limits_{\phi,\theta\sim \mbox{Unif}(0,2\pi)}\cos^2(\phi-\theta)
=\mathop{\mathbb{E}}\limits_{\phi\sim \mbox{Unif}(0,2\pi)}\left[\mathop{\mathbb{E}}\limits_{\theta\sim \mbox{Unif}(0,2\pi)}\left[\cos^2(\phi-\theta)\vert\phi\right]\right] \\
&=&\frac{1}{2},
\end{eqnarray*}
and
\begin{eqnarray*}
&&\mathop{\mathbb{E}}\limits_{u_i, u_j, v\sim \mbox{Unif}(S^1)}\left(\left|u_i^Tv\right|\left|u_j^Tv\right|\right) \\
&=&\mathop{\mathbb{E}}\limits_{\phi_i,\phi_j,\theta\sim \mbox{Unif}(0,2\pi)}\left\{\left|\cos(\theta-\phi_i)\right|\left|\cos(\theta-\phi_j)\right|\right\} \\
&=&\mathop{\mathbb{E}}\limits_{\phi_j,\theta\sim \mbox{Unif}(0,2\pi)}\left\{\mathop{\mathbb{E}}\limits_{\phi_i\sim \mbox{Unif}(0,2\pi)}\left[\left|\cos(\theta-\phi_i)\right|[\left|\cos(\theta-\phi_j)\right|\vert\phi_j,\theta\right]\right\} \\
&=&\mathop{\mathbb{E}}\limits_{\phi_j,\theta\sim \mbox{Unif}(0,2\pi)}\left\{\left|\cos(\theta-\phi_j)\right|\cdot\frac{2}{\pi}\right\} \\
&=&\mathop{\mathbb{E}}\limits_{\theta\sim \mbox{Unif}(0,2\pi)}\left\{\mathop{\mathbb{E}}\limits_{\phi_j\sim \mbox{Unif}(0,2\pi)}\left[|\cos(\theta-\phi_j)|\cdot\frac{2}{\pi}\vert\theta\right]\right\}
=\mathop{\mathbb{E}}\limits_{\theta\sim \mbox{Unif}(0,2\pi)}\left[\frac{2}{\pi}\cdot\frac{2}{\pi}\right] \\
&=&\frac{4}{\pi^2},
\end{eqnarray*}
and
\begin{eqnarray*}
&&\mathop{\mathbb{E}}\limits_{u_i, v\sim \mbox{Unif}(S^1)}\left\{\left|u_i^Tv\right|\right\} \\
&=&\mathop{\mathbb{E}}\limits_{\phi,\theta\sim \mbox{Unif}(0,2\pi)}\left|\cos(\phi-\theta)\right|=\mathop{\mathbb{E}}\limits_{\phi\sim \mbox{Unif}(0,2\pi)}\left[\mathop{\mathbb{E}}\limits_{\theta\sim \mbox{Unif}(0,2\pi)}\left[\left|\cos(\phi-\theta)\right|\vert\phi\right]\right] \\
&=& \mathop{\mathbb{E}}\limits_{\phi\sim \mbox{Unif}(0,2\pi)}\frac{2}{\pi}
=\frac{2}{\pi}.
\end{eqnarray*}

Therefore by plugging the above results into \eqref{eq_M}, we eventually get
\begin{eqnarray*}
&&\mathop{\mathbb{E}}\limits_{u_i, v\sim \mbox{Unif}(S^1)}\left\{\left|C_n\sum\limits_{i=1}^n\left|u_i^Tv\right|-1\right|^2\right\} \\
&=&\frac{\pi^2}{4N^2}\left(N\cdot\frac{1}{2}+2\frac{N(N-1)}{2}\frac{4}{\pi^2}\right)-2\frac{\pi}{2n}\cdot N\cdot\frac{2}{\pi}+1
=\frac{\pi^2-8}{8N}. \qquad \qquad \qed
\end{eqnarray*}
\end{proof}

\subsection{Proof of Lemma \ref{V_max}}
\label{sec:app:proof-5}
\begin{proof}
Recall that we have
\begin{equation}\label{maxmax}
V_{\max} = \max\limits_{v:||v||_2=1}\sum\limits_{i=1}^n\left|u_i^Tv\right|
= \max\limits_{v:||v||_2=1}\max\limits_{s_i\in\{1, -1\}}\left(\sum\limits_{i=1}^ns_iu_i^T\right)v,
\end{equation}
where the second equality is based on a standard trick in optimization \cite[Chapter 9.2(ii)]{bradley1977applied}.

The following is an application of the Cauchy-Schwartz inequality:
$$
\left(\sum\limits_{i=1}^ns_iu_i^T\right)v\leq \sqrt{\left\|\sum\limits_{i=1}^n s_iu_i\right\|_2^2||v||_2^2}
= \left\|\sum\limits_{i=1}^n s_iu_i\right\|,
$$
where the equality is due to the condition $\left\|v\right\|=1$.

In the first part, the equality holds if and only if $|v_j| = c\left|\left(\sum\limits_{i=1}^ns_iu_i\right)_j\right|, j = 1,...,p$.

Apparently, we must have $c = \left\|\sum\limits_{i=1}^ns_iu_i\right\|^{-1}$ (because of $\left\|v\right\|=1$).

So we can have
\begin{equation}\label{v}
v= \frac{\sum\limits_{i=1}^ns_iu_i}{\left\|\sum\limits_{i=1}^ns_iu_i\right\|}.
\end{equation}
Combining \eqref{v} and \eqref{maxmax}, we have \eqref{Vmax}.
\qed
\end{proof}

\subsection{Proof of Lemma \ref{prop}}
\label{sec:app:proof-6}
\begin{proof}
We start with a special case: the linear subspace is $\mathbb{R}^p$(the entire space). Obviously the $n$ hyperplanes
$$\left\{y: u_i^Ty=0\right\}\mbox{, for } i = 1,2,...,n$$
divide the sphere $S^{p-1}$ into at most $2^n$ sectors. Within each sector, function$f(v)$ is strictly linear, therefore the minima cannot be an interior point.
Recall a boundary point $v$ must have $u_j^Tv=0$ for at least one $j, 1\leq j\leq n$.

Now we consider a linear subspace with dimension less than $p$, say, $k$.
Let $b_1,...,b_k$ be the orthonormal basis of such a linear subspace, we have $\forall x\in \Omega$,
$$
x = \sum\limits_{j=1}^kc_jb_j,
$$
and
$$
\sum\limits_{j=1}^kc_j^2=1, (\mbox{Because we have }\|x\|=1).
$$
Therefore, we have
$$
f(v) = \sum\limits_{i=1}^n\left|u_i^Tv\right|
= \sum\limits_{i=1}^n\left|u_i^T\sum\limits_{j=1}^kc_jb_j\right|
= \sum\limits_{i=1}^n\left|\sum\limits_{j=1}^kc_j(u_i^Tb_j)\right|
=  \sum\limits_{i=1}^n\left|h_i^Tc\right|,
$$
where $c = (c_1,...,c_k)^T$ and $h_i^T = \left(u_i^Tb_1,...,u_i^Tb_k\right), i=1, ..., n.$
Note that in the early part of this proof, the $u_i$ can be arbitrary.

The above derivation indicates that the latter case can be converted into the former case, as $c\in\mathbb{R}^k$ is from the entire space.
So we can get
$$
h_i^Tc = 0\mbox{ for at least one }i, 1 \leq i \leq n.
$$
As $h_i^Tc = u_i^T\left(\sum\limits_{j=1}^kb_jc_j\right)$, the above is equivalent to
$$
u_i^T\left(\sum\limits_{j=1}^kb_jc_j\right) =0\mbox{ for at least one }i, 1 \leq i \leq n.
$$
Quantity $\sum\limits_{j=1}^kb_jc_j$ can also be denoted as $v$, because any vector on the space is a linear combination of the orthonormal basis $b_1, ..., b_k.$

From all the above, we proved the lemma. \qed
\end{proof}

\subsection{Proof of Lemma \ref{Omega_0}}
\label{sec:app:proof-7}
\begin{proof}
For notational simplicity, let us donate $\Omega = \Omega(v_{\min})$. We can easily verify the following
$$
\mbox{rank}(\Omega)\leq p-1.
$$
Otherwise (i.e., $\mbox{rank}(\Omega) = p$), by the definition of $\Omega$, we will have $v_{\min} = 0.$
Now we show that
$$
\mbox{rank}(\Omega)\geq p-1.
$$
We use contradiction.
Let us assume that $\mbox{rank}(\Omega)<p-1.$
Define the following complementary set
$$
\Omega^{\perp} = \left\{x:\|x\| = 1, x\perp \Omega\right\},
$$
where $x\perp\Omega$ stands for that $x$ is perpendicular to the linear space that is spanned by all the $u_j$\rq s in $\Omega$.
Because $v_{\min}$ is a minimizer, we have that
$$
f(v_{\min}) = \min\limits_{v\in\Omega^\perp}f(v)
=  \min\limits_{v\in\Omega^\perp}\sum\limits_{i=1}^n\left|u_i^Tv\right|
= \min\limits_{v\in\Omega^\perp}\sum\limits_{u_i\not\in\Omega}\left|u_i^Tv\right|
$$
Note that if $\mbox{rank}(\Omega)<p-1$, we have $\mbox{dim}(\Omega^\perp)\geq 2$.

By Lemma \ref{prop}, we can declare that there exists $u_j\not\in\Omega$, $u_j^Tv_{\min}=0$.
However, this contradicts to the definition of $\Omega$, which is supposed to be the maximal subset. \qed
\end{proof}

\subsection{Proof of Theorem \ref{dim=n}}
\label{sec:app:proof-8}
\begin{proof}
When $n=p$, we have
$$
f(v) = |u_1^Tv| + |u_2^Tv| + ... + |u_p^Tv|, \mbox{ for } u_1,...u_p, v\in \mathbb{S}^{p-1}.
$$
According to the Lemma \ref{Omega_0}, we have
$$
\mbox{rank}\left(\Omega(v_{\min})\right) = p-1,
$$
where $\Omega(v_{\min}) = \left\{u_j:u_j^Tv_{\min}=0\right\}$, and $v_{\min}$ is the minimizer of $f(v).$
So the minimizer of $f(v)$ must satisfy that it is orthogonal to $p-1$ linearly independent $u_j$\rq s.

Assume every $p-1$ $u_j$\rq s are linearly independent.
Then the minimizer is among the vectors that are orthogonal to any $p-1$ $u_j$\rq s.
We know there are ${p \choose p-1}=p$ different combinations of $u_j$\rq s, and each combination is correspond to $2$ unit vectors orthogonal to one of the $p-1$ $u_j$\rq s.
(These $2$ unit vectors are the two directions that are orthogonal to a $p-1$ spaces in $\mathbb{R}^p.$)
Thus there are totally $2p$ unit vectors that might be the minimizer of $f(v).$

Suppose $p$ of the $2p$ unit vectors are those whose first nonzero entry is positive.
Denote them as $v^{-(1)}, v^{-(2)}, ..., v^{-(p)}$.
Then the other $p$ unit vectors would be $-v^{-(1)}, -v^{-(2)}, ..., -v^{-(p)}.$
Suppose that for any $i\in\{1,2,...,p\},$ $v^{-(1)}, v^{-(2)}, ..., v^{-(p)}$ satisfy
$$
\left(v^{-(i)}\right)^Tu_j =0, \forall j\not=i, j\in\{1,2,...,p\}.
$$
Thus the minimum value of $f(v)$ can be upper bounded by the average of the function values of the $p$ unit vectors:
\begin{equation}\label{fvmin}
\min_vf(v) \leq \frac{1}{p}\sum\limits_{i = 1}^pf(v^{-(i)}).
\end{equation}
We can also bound the maximum value of $f(v)$ by some value:
\begin{equation}\label{fvmax1}
\max_vf(v) \geq \max_{s_i = \pm1}f\left(\frac{\sum\limits_{i = 1}^ps_iv^{-(i)}}{\|\sum\limits_{i = 1}^ps_iv^{-(i)}\|}\right).
\end{equation}
Because we have
$$
f\left(\frac{\sum\limits_{i = 1}^ps_iv^{-(i)}}{\|\sum\limits_{i = 1}^ps_iv^{-(i)}\|}\right)
= \frac{f\left(\sum\limits_{i = 1}^ps_iv^{-(i)}\right)}{\|\sum\limits_{i = 1}^ps_iv^{-(i)}\|},
$$
and
\begin{eqnarray*}
f\left(\sum\limits_{i = 1}^ps_iv^{-(i)}\right)
&=& \sum\limits_{j=1}^p\left|u_j^T\left(\sum\limits_{i = 1}^ps_iv^{-(i)}\right)\right|
= \sum\limits_{j=1}^p\left|\sum\limits_{i = 1}^ps_iu_j^Tv^{-(i)}\right|
= \sum\limits_{j=1}^p\left|s_ju_j^Tv^{-(j)}\right| \\
&=& \sum\limits_{j=1}^p\left|u_j^Tv^{-(j)}\right| = \sum\limits_{j=1}^pf\left(v^{-(j)}\right),
\end{eqnarray*}
we can get
$$
f\left(\frac{\sum\limits_{i = 1}^ps_iv^{-(i)}}{\|\sum\limits_{i = 1}^ps_iv^{-(i)}\|}\right)
= \frac{\sum\limits_{j=1}^pf\left(v^{-(j)}\right)}{\|\sum\limits_{i = 1}^ps_iv^{-(i)}\|}.
$$
So \eqref{fvmax1} becomes
\begin{equation}\label{fvmax2}
\max_vf(v) \geq \max_{s_i = \pm1}\frac{\sum\limits_{i = 1}^pf(v^{-(i)})}{\left\|\sum\limits_{i = 1}^ps_iv^{-(i)}\right\|}
= \frac{\sum\limits_{i = 1}^pf(v^{-(i)})}{\min\limits_{s_i = \pm 1}\left\|\sum\limits_{i = 1}^ps_iv^{-(i)}\right\|}.
\end{equation}
Based on \eqref{fvmin} and \eqref{fvmax2},  we can get
$$
\frac{\min\limits_vf(v)}{\max\limits_vf(v)} \leq
\frac{\frac{1}{p}\sum\limits_{i = 1}^pf(v^{-(i)})}{\frac{\sum\limits_{i = 1}^pf(v^{-(i)})}{\min\limits_{s_i = \pm 1}\left\|\sum\limits_{i = 1}^ps_iv^{-(i)}\right\|}}
= \frac{1}{p}\min\limits_{s_i = \pm 1}\left\|\sum\limits_{i = 1}^ps_iv^{-(i)}\right\|.
$$
So we have
\begin{equation}\label{bound}
\max\limits_{u_1,...u_p}\frac{\min\limits_vf(v)}{\max\limits_vf(v)} \leq
\frac{1}{p}\max_{u_1,...u_p}\min\limits_{s_i = \pm 1}\left\|\sum\limits_{i = 1}^ps_iv^{-(i)}\right\|.
\end{equation}
Since solving the problem
$$
\max_{u_1,...u_p}\min\limits_{s_i = \pm 1}\left\|\sum\limits_{i = 1}^ps_iv^{-(i)}\right\|
$$
is equivalent to solving $$\max_{u_1,...u_p}\min\limits_{s_i = \pm 1}\left\|\sum\limits_{i = 1}^ps_iv^{-(i)}\right\|^2,
$$
 we will try to solve the latter one in the following.
 We have
$$
\max_{u_1,...u_p}\min\limits_{s_i = \pm 1}\left\|\sum\limits_{i = 1}^ps_iv^{-(i)}\right\|^2 = \max_{u_1,...u_p}\min\limits_{s_i = \pm 1}s^T\Sigma s,
$$
where we have $\Sigma \in\mathbb{R}^{p\times p}$ and
\begin{eqnarray*}
\Sigma &=& \left(\begin{array}{ccccc}
1& \left(v^{-(1)}\right)^Tv^{-(2)} & \left(v^{-(1)}\right)^Tv^{-(3)} &  \cdots & \left(v^{-(1)}\right)^T v^{-(p)} \\
\left(v^{-(2)}\right)^T v^{-(1)} & 1 & \left(v^{-(2)}\right)^Tv^{-(3)}  & \cdots &\left(v^{-(2)}\right)^T v^{-(p)} \\
\cdots&\cdots&\cdots&\cdots&\cdots \\
\left(v^{-(p)}\right)^T v^{-(1)} & \left(v^{-(p)}\right)^Tv^{-(2)}  & \left(v^{-(p)}\right)^Tv^{-(3)} & \cdots &1 \\
\end{array}
\right).
\end{eqnarray*}
We claim that $\min\limits_{s_i = \pm 1}s^T\Sigma s$ is upper bounded by $p$, and $\min\limits_{s_i = \pm 1}s^T\Sigma s=p$ when
$$
\left(v^{-(i)}\right)^Tv^{-(j)}=0, \forall i\not=j.
$$

We can see that if there are some $i,j$ $(i\not=j)$, such that $\left(v^{-(i)}\right)^Tv^{-(j)}\not=0,$ then there exists some $s$, such that $s^T\Sigma s\leq p$.
Suppose there does not exist such $s$, which means for any $s$, the following holds,
\begin{equation}\label{eq:sSs}
s^T\Sigma s>p.
\end{equation}
Since we have
\begin{eqnarray*}
\sum\limits_{s_i = \pm 1} s^T\Sigma s
&=&\sum\limits_{s\in\{s:s_k=\pm1\}}\sum\limits_{i,j}s_is_j\Sigma_{ij}
= \sum\limits_{s\in\{s:s_k=\pm1\}}\left(p + \sum\limits_{i\not=j}s_is_j\Sigma_{ij}\right) \\
&=& 2^pp +  \sum\limits_{s\in\{s:s_k=\pm1\}}\sum\limits_{i\not=j}s_is_j\Sigma_{ij} = 2^pp,
\end{eqnarray*}
this will lead to $\sum\limits_{s_i = \pm 1} s^T\Sigma s>2^pp,$ which is a contradiction of \eqref{eq:sSs}.
So we proved that our claim is true, which says
$$
\min\limits_{s_i = \pm 1}s^T\Sigma s\leq p,
$$
and when $\left(v^{-(i)}\right)^Tv^{-(j)}=0, \forall i\not=j,$ which means $\Sigma = I_p$, we have
$\min\limits_{s_i = \pm 1}s^T\Sigma s =p.$

We know that $v^{-(i)}$\rq s only depends on $u_i$\rq s, and when
$u_i^Tu_j = 0, \forall i\not=j,$ we have
$\left(v^{-(i)}\right)^Tv^{-(j)}=0, \forall i\not=j.$
So when the following holds,
$$
u_i^Tu_j = 0, \forall i\not=j,
$$
$\min\limits_{s_i = \pm 1}s^T\Sigma s$ achieves the maximum value, which is $p$.
Therefore we get
$$
\max_{u_1,...u_p}\min\limits_{s_i = \pm 1}\left\|\sum\limits_{i = 1}^ps_iv^{-(i)}\right\|^2
= \max_{u_1,...u_p}\min\limits_{s_i = \pm 1}s^T\Sigma s =p,
$$
which leads to
\begin{equation}\label{eq:maxminnorms}
\max_{u_1,...u_p}\min\limits_{s_i = \pm 1}\left\|\sum\limits_{i = 1}^ps_iv^{-(i)}\right\| = \sqrt{p}.
\end{equation}
Based on \eqref{bound} and \eqref{eq:maxminnorms}, we have
\begin{equation}\label{upper}
\max\limits_{u_1,...u_p}\frac{\min\limits_vf(v)}{\max\limits_vf(v)} \leq \frac{\sqrt{p}}{p}.
\end{equation}

Next if we can prove that  when $u_i^Tu_j = 0, \forall i\not=j, $
the following holds,
$\frac{\min\limits_vf(v)}{\max\limits_vf(v)} = \frac{\sqrt{p}}{p}$;
combined with \eqref{upper}, we can arrive at the conclusion and finish the proof of the Lemma.

Let us assume
$$
u_i^Tu_j = 0, \forall i\not=j.
$$
Without loss of generality, we can assume $u_i = e_i, \forall i\not=j,$ where $e_i$\rq s are the basic vectors of $\mathbb{R}^p.$
Then the following holds,
$$
f(v) = \sum\limits_{i=1}^p|v_i|, v\in \mathbb{S}^{p-1}.
$$
We can easily verify the following,
$\min\limits_vf(v) = 1,$ and $\max\limits_vf(v) = \sqrt{p}.$
So when $u_i^Tu_j = 0, \forall i\not=j$, we have
$$
\frac{\min\limits_vf(v)}{\max\limits_vf(v)} = \frac{\sqrt{p}}{p}.
$$
Combined what we get from \eqref{upper}, that is, $\frac{\sqrt{p}}{p}$ is the upper bound of $\max\limits_{u_1,...u_p}\frac{\min\limits_vf(v)}{\max\limits_vf(v)},$ we finished the proof. \qed
\end{proof}


\subsection{Proof of Lemma \ref{u_1}}
\label{sec:app:proof-9}
\begin{proof}
As we have
$$
\min_{x:\|x\|=1,\left<x,v\right>=\theta}\left\|x+B\right\|^2
=  \min_{x:\|x\|=1,\left<x,v\right>=\theta}1 + \|B\|^2 + 2\left<x,B\right>,
$$
the problem \eqref{u1} is equivalent to
\begin{equation}\label{u1_2}
\min_{x:\|x\|=1,\left<x,v\right>=\theta}\left<x,B\right>.
\end{equation}
Suppose $x^*$ is the solution to the above problem \eqref{u1}. Then $x^*$ is the farthest point to $B$ on the circle that satisfies the constraints $\|x\|=1,\left<x,v\right>=\theta$. The three points $x^*, v, \mbox{ and }B$ must be on a same plane. Therefore, we can assume
\begin{equation}\label{eq:xstar}
x^* = av+bB.
\end{equation}
Bringing \eqref{eq:xstar} into \eqref{u1_2}, we have
\begin{equation*}\label{u1_3}
\min_{x:\|x\|=1,\left<x,v\right>=\theta}\left<x,B\right>=\min_{a,b:\|av+bB\|=1,\left<av+bB,v\right>=\theta}\left<av+bB,B\right>,
\end{equation*}
which is equivalent to
\begin{eqnarray}
\min_{a,b} & av^TB+bB^TB \label{opt}\\
\text{s.t.} &
 \begin{cases}\label{cond}
    a^2 + b^2\|B\|^2 + 2abv^TB&=1\\
    a+bv^TB& =\cos\theta.\\
 \end{cases}
\end{eqnarray}
Bringing the second equation in the constraints \eqref{cond}, that is,
\begin{equation}\label{a}
a = \cos\theta - bv^TB
\end{equation}
into \eqref{opt}, we have
\begin{eqnarray}
\label{opt2}
\min_{b} && v^TB\cos\theta+b\left(B^TB-(v^TB)^2\right) \nonumber \\
\text{s.t.}&&
 b^2\left(B^TB-(v^TB)^2\right) + \cos\theta^2=1.
\end{eqnarray}
Then the solution to \eqref{opt2} is
\begin{equation*}\label{b}
b = \pm\frac{\sin\theta}{\sqrt{B^TB-(v^TB)^2}}.
\end{equation*}
Since $B^TB-(v^TB)^2\geq 0$, the minimum is achieved when
\begin{equation}\label{b_final}
b = -\frac{|\sin\theta|}{\sqrt{B^TB-(v^TB)^2}}.
\end{equation}
Combining \eqref{b_final} with \eqref{a}, we can get the solution. \qed
\end{proof}

\subsection{Proof of Theorem \ref{gtheta}}
\label{sec:app:proof-10}
\begin{proof}
If $\theta \in [0,\pi)$, the square of the denominator of \eqref{x} becomes
$$
1 + 2v^TB\cos\theta + B^TB - 2\sin\theta\sqrt{B^TB-(v^TB)^2}
= 1 + B^TB + 2\sqrt{B^TB}\sin(\alpha - \theta),
$$
where
\begin{eqnarray*}
\sin\alpha &=& \frac{v^TB}{\sqrt{B^TB}}, \\
\cos\alpha &=& \frac{\sqrt{B^TB-(v^TB)^2}}{\sqrt{B^TB}}.
\end{eqnarray*}
Similarly, if $\theta \in [\pi, 2\pi)$, then the square of the denominator of \eqref{x} becomes
$$
1 + 2v^TB\cos\theta + B^TB + 2\sin\theta\sqrt{B^TB-(v^TB)^2}
= 1 + B^TB + 2\sqrt{B^TB}\sin(\alpha + \theta),
$$
where $\alpha$ is the same defined as above.

Hence, for $\theta \in [0,\pi)$, we have
$$
f(\theta) = \frac{|\cos\theta| + A}{\sqrt{1 + B^TB + 2\sqrt{B^TB}\sin(\alpha - \theta)}};
$$
for  $\theta \in [\pi,2\pi)$, we have
$$
f(\theta) = \frac{|\cos\theta| + A}{\sqrt{1 + B^TB + 2\sqrt{B^TB}\sin(\alpha + \theta)}},
$$
which is equivalent to
$$
f(\theta) = \frac{|\cos\theta| + A}{\sqrt{1 + B^TB + 2\sqrt{B^TB}\sin(\alpha + \theta)}},
$$
where $\theta \in [-\pi,0),$
which is also equivalent to
$$
f(\theta) = \frac{|\cos\theta| + A}{\sqrt{1 + B^TB + 2\sqrt{B^TB}\sin(\alpha - \theta)}},
$$
where $\theta \in [0,\pi).$

So the problem we want to solve is actually to maximize
$$
f(\theta) = \frac{|\cos\theta| + A}{\sqrt{1 + B^TB + 2\sqrt{B^TB}\sin(\alpha - \theta)}}
$$
on $\theta \in [0,\pi)$.

Under the first order condition, we have that if $\theta^*$ maximizes $f(\theta)$, then
$0 = f\rq(\theta^*).$

When $\theta \in [0,\frac{\pi}{2})$, the first order differentiable function of $f(\theta)$ can be written as
$$
f\rq(\theta) = \frac{-(1+B^TB)\sin\theta + \sqrt{B^TB}\left[\cos\alpha
+ A\cos(\alpha-\theta)
- \sin\theta\sin(\alpha - \theta)\right]}{\left(1 + B^TB + 2\sqrt{B^TB}\sin(\alpha - \theta)\right)^{3/2}};
$$
When $\theta \in [\frac{\pi}{2},\pi)$, the first order differentiable function of $f(\theta)$ can be written as
$$
f\rq(\theta) = \frac{(1+B^TB)\sin\theta
+ \sqrt{B^TB}\left[-\cos\alpha + A\cos(\alpha-\theta)
+ \sin\theta\sin(\alpha - \theta)\right]}{\left(1 + B^TB + 2\sqrt{B^TB}\sin(\alpha - \theta)\right)^{3/2}}.
$$

If we define function $g(\theta)$ as the following
\begin{equation*}
g(\theta) =
\left\{ \begin{array}{ll}
\sqrt{B^TB}\left[\cos\alpha
+ A\cos(\alpha-\theta) - \sin\theta\sin(\alpha - \theta)\right] & \\
-(1+B^TB)\sin\theta, & \mbox{ if } \theta\in[0,\frac{\pi}{2}),\\
\sqrt{B^TB} \left[-\cos\alpha + A\cos(\alpha-\theta) + \sin\theta\sin(\alpha - \theta)\right] & \\
+ (1+B^TB)\sin\theta  &\mbox{ if } \theta\in[\frac{\pi}{2},\pi),
\end{array}
\right.
\end{equation*}
Then our goal becomes to find the zeros of the function $g(\theta).$ \qed
\end{proof}

\bibliographystyle{plainnat}
\bibliography{author}

\end{document}